\let\svthefootnote\thefootnote
\DeclareMathOperator{\tr}{Tr}
\DeclareMathOperator{\im}{Im}
\DeclareMathOperator{\linspan}{Span}
\DeclareMathOperator{\aut}{Aut}
\DeclareMathOperator{\rank}{rank}
\DeclareMathOperator{\diag}{diag}
\newcommand{\deff}{\mbox{$\stackrel{\rm def}{=}$}}
\newcommand{\ef}{\mathbb{F}}
\newcommand{\efq}{\ef_q}
\newcommand{\efqm}{\ef_{q^m}}
\newcommand{\rs}{\mathsf{RS}}
\newcommand{\grs}{\mathsf{GRS}}
\newcommand{\bad}{\mathsf{Bad}}
\newcommand{\reps}{\mathsf{Reps}}
\newcommand{\prob}{\mathrm{Prob}}
\newcommand{\fim}{f^{\im}}
\newcommand{\undermat}[2]{%
  \makebox[0pt][l]{$\smash{\underbrace{\phantom{%
    \begin{matrix}#2\end{matrix}}}_{\text{$#1$}}}$}#2}
\newtheorem{proposition}{Proposition}
\newtheorem{definition}[proposition]{Definition}
\newtheorem{corollary}[proposition]{Corollary}
\newtheorem{lemma}[proposition]{Lemma}
\newtheorem{theorem}[proposition]{Theorem}
\newtheorem{remark}[proposition]{Remark}
\begin{document}
\title{Repairing Reed--Solomon Codes Evaluated on Subspaces}

\author{\textbf{Amit Berman}\IEEEauthorrefmark{1}, \textbf{Sarit Buzaglo}\IEEEauthorrefmark{1}, \textbf{Avner Dor}\IEEEauthorrefmark{1}, \textbf{Yaron Shany}\IEEEauthorrefmark{1}, and \IEEEauthorblockN{\textbf{Itzhak Tamo}\IEEEauthorrefmark{2}\IEEEauthorrefmark{1}}

\IEEEauthorblockA{\IEEEauthorrefmark{1}Samsung Semiconductor Israel R\&D Center, 2 Shoham St., Ramat Gan, 5251003, Israel\\}

\IEEEauthorblockA{\IEEEauthorrefmark{2}Department of Electrical Engineering-Systems, Tel Aviv University, Tel Aviv 6997801, Israel\\}

{\it  \{amit.berman, sarit.b, avner.dor, yaron.shany\}@samsung.com}, {\it zactamo@gmail.com \vspace{-5ex}}}

\maketitle\let\thefootnote\relax\footnote{This work was carried out at Samsung Semiconductor Israel R\&D Center.}
\addtocounter{footnote}{-1}\let\thefootnote\svthefootnote

\begin{abstract}
We consider the repair problem for Reed--Solomon (RS)
codes, evaluated on an $\efq$-linear subspace $U\subseteq\efqm$ of dimension $d$, where $q$ is a prime
power, $m$ is a positive integer, and $\efq$ is the Galois field of size $q$. For the case of
$q\geq 3$, we show the existence of a linear repair scheme for the RS code of length $n=q^d$ and codimension $q^s$, $s< d$,
evaluated on $U$, in
which each of the $n-1$ surviving nodes transmits only $r$ symbols of
$\efq$, provided that $ms\geq d(m-r)$.  For the case of
$q=2$, we prove a similar result, with some restrictions
on the evaluation linear subspace $U$. Our proof is based on a
probabilistic argument, however the result is not merely an existence result; the success
probability is fairly large (at least $1/3$) and there is a simple
criterion for checking the validity of the randomly chosen linear repair scheme.
Our result extend the construction of Dau--Milenkovich to the range $r<m-s$,
for a wide range of parameters.
\end{abstract}

\section{Introduction}

Erasure codes are widely used for increasing the reliability of
distributed storage systems. In such systems, data is encoded and
stored on several {\it nodes}, where each storage node corresponds to
one coordinate of the erasure code. To minimize storage overhead due
to coding, erasure codes used in practice are typically Maximum
Distance Separable (MDS) codes. While an erasure code can typically
recover from several node failures (i.e., from more than a single
erasure), a single-node failure is the most common type of
failure~\cite{RSGKB}. Hence, there is an interest in finding MDS codes that can
efficiently repair a single node failure.

To repair a single node failure, the system has to download part of the content of
some of the surviving nodes, called {\it helper nodes}. The total amount of data downloaded from the
helper nodes is called the {\it repair bandwidth}. A code designed for
minimizing the repair bandwidth is called a {\it regenerating
code}. Regenerating codes have been studied extensively since the
introduction of the subject in \cite{DGWWR}. A convenient way to measure the repair bandwidth is through the concept of {\it sub-packetization}, where data is divided to smaller units of a fixed size and each helper node transmits some function of these units.
A common approach, which is adopted in this paper, is to utilize a sub-packetization is to consider codes over the extension field $\efqm$ (over $\efq$), where each data node is composed of $m$ symbols of $\efq$, hence, the units in the sub-packetization are $\efq$-symbols.
A code that is defined over
$\efq^{m}$ is called an
{\it array code} of {\it sub-packetization} $m$, if $m\geq 2$, and is called a {\it scalar code}, otherwise.


For an MDS code of length $n$ and dimension $k$ over $\efqm$, the {\it cut-set bound} \cite{DGWWR} states that the repair bandwidth
is at least $hm/(h+1-k)$ $\efq$-symbols, where $h$ is the maximum number of helper nodes that participates in a single node repair. Thus, the repair bandwidth is minimized when $h$ takes its maximum possible value of $n-1$. In this
paper we consider only the case $h=n-1$, for which the cut-set
bound reads $(n-1)m/(n-k)$. An MDS array code achieving the cut-set
bound is called a {\it minimum storage regenerating (MSR)} code. By
now, there are several constructions of MSR array codes (see, e.g.,~
\cite{TWB} and ~\cite{YB}).


Guruswami and Wootters (GW)~\cite{GW} introduced a useful characterization
of \emph{linear} repair scheme for linear MDS codes in terms of
appropriate codewords of the dual codes. In the same paper, Guruswami
and Wootters also introduced a linear repair scheme for
Reed--Solomon (RS) codes over $\efqm$, of full-length (i.e., their evaluation-set is the entire field) and of codimension
$q^{m-1}$. This linear repair scheme is optimal, that is,
it achieves the minimum possible repair
bandwidth of any linear repair scheme with the same code parameters and sub-packetization. The result of Guruswami and Wootters was later extended by Dau and
Milenkovich (DM)~\cite{DM}, who presented linear repair schemes
for RS codes with higher dimensions, which is optimal only for RS codes.

While the schemes of~\cite{DM} and~\cite{GW} are optimal
for full-length RS codes, where the number of data units $m$ in the sub-packetization is
logarithmic in the length, they are quite far from the cut-set
bound. Until recently, it was an open question whether scalar MDS codes, particularly RS codes, can achieve the cut-set
bound. This question was answered in~\cite{TYB}, where an explicit evaluation set was presented for which the corresponding RS codes achieve the cut-set
bound. For practical implementation, however, this construction is infeasible, since it requires $m$ to be exponential in $n\log n$, where $n$ is the code length~\cite{TYB}. For this reason, there is both a
practical and a theoretical interest to further explore the tradeoff
between the number of data units ($m$) and the repair bandwidth of RS codes and to find
additional repair schemes. This
direction has been recently pursued in~\cite{GJ},~\cite{LWJ}, and~\cite{LWJ2}.

In this paper we consider linear repair schemes for RS
codes evaluated on an $\efq$-linear
subspace $U\subseteq\efqm$, in which each surviving node transmits $r$ $\efq$-symbols
for the repair of the failed node.
When $q$ is greater than two, we show the existence of such a
linear repair scheme for every choice of $U$,
provided that $ms\geq d(m-r)$, where $q^s$ is the codimension of the RS code and $d$ is the dimension of $U$.
For the case of $q=2$, we prove that such a linear
repair scheme exists for every choice of $U$,
whenever $ms\geq d(m-r)+1$, and for many $\efq$-linear
subspaces, when $ms=d(m-r)$.
Our result translates to a practical
probabilistic algorithm that outputs with high probability
a linear repair scheme for the code, since success probability
is fairly large (at least $1/3$) and there is a simple
algorithm for checking the validity of the construction.
Our result generalize the result of Dau and
Milenkovich and the ``scheme in one coset'' presented in~\cite{LWJ} and~\cite{LWJ2}.

A useful property of our scheme is a duality property between the pair of parameters
$d$ and $r$, and the pair of parameters $m-r$ and $m-d$.
Namely, assume $C$ is an RS code that is evaluated on an $\efq$-linear subspace of dimension
$d$, and that our construction generates a linear repair scheme for $C$ in which each surviving node transmits $r$ $\efq$-symbols.
Then there is an explicit way to derive a linear repair scheme for an RS code that is evaluated on an $\efq$-linear subspace of dimension $m-r$, in which each surviving node transmits $m-d$ $\efq$-symbols.

We also present an explicit construction for the special cases
where $d$ or $m-r$ divides $m$ and for a specific choice of $U$.
When $d$ divides $m$, we set $U$ to be the subfield $\ef_{q^d}$,
and present an explicit construction that is almost identical to
the scheme of Li {\it et al.}~\cite{LWJ2}.
Notice that, our existence result supports a much wider parameters range
as $d$ may not divide $m$ and the evaluation set may be
\emph{any} subspace of dimension $d$.
The case that $m-r$ divides $m$ follows immediately
from the duality of our scheme.

The rest of this paper is organized as follows. In Section~\ref{sec:background}
we present some of the basic concepts that are used throughout the paper. In particular,
we recall the concept of a linear repair scheme and review the
important result from~\cite{GW} that provides a
convenient criterion for the existence of a linear repair scheme.
In Section~\ref{sec:lrs_rs}, we present some general results on linear repair schemes
for RS codes that are evaluated on linear subspaces.
The main result of the paper is given in Section~\ref{sec:mainres}.
In Section~\ref{sec:construction} we show explicit constructions, where $d$ or $m-r$ divides $m$.
Some specific examples are given in Section~\ref{sec:eg} and we conclude the paper in Section~\ref{sec:conclusion}.

\section{Preliminaries}\label{sec:background}

The set of all polynomials in the variable $X$ with coefficients taken from a field $\ef$ is denoted by $\ef [X]$. The degree of a polynomial $f\in \ef[X]$ is denoted by $\deg(f)$.
For a subset $S$ of an $\efq$-linear space, the $\efq$-linear subspace that is spanned by $S$
is denoted by $\linspan_{q}(S)$ and the rank of $S$ (the dimension of $\linspan_{q}(S)$) is denoted by $\rank_q(S)$.
For a vector $\mathbf{s}=(s_1,\ldots,s_{\ell})\in \efqm^{\ell}$  we will write $\rank_q(\mathbf{s})$, for $\rank_q(s_1,\ldots,s_{\ell})$.
As usual, for a matrix $A\in \efq^{n\times n}$, the rank of $A$ over $\efq$ is denoted by $\rank_q(A)$. 

Let $V\subseteq \efqm$ be an $\efq$-subspace
of dimension $r$ with a basis $B=\{b_1,\ldots,b_r\}$ and let $S=\{b_1,\ldots,b_m\}$ be a basis for $\efqm$
that contains $B$. For $x\in \efqm$, the {\it projection} of $x$ to $V$, ${x}_{V}$, is the unique element $v\in V$ such that
$x=v+w$, for some (unique) $w\in \linspan_{q}(S\setminus B)$.
For an element $u\in \efqm$, consider the $\efq$-linear map $F_u:\efqm\to \efqm$ defined by $F_u(x)\deff u\cdot x$, and let $[u]_S\in \efq^{m\times m}$ be the
matrix representation of $F_u$ by right multiplication, according to the basis $S$. That is, if $\mathbf{x}=(x_1,x_2,\ldots, x_m)\in \efq^m$ is the vector representation of $x\in \efqm$ according to the basis $S$, then $[u]_S\cdot\mathbf{x}^T$ is the vector representation of $F_u(x)$ according to the basis $S$. We denote by $[u]_{B,S}\in\efq^{r\times m}$ the matrix
consisting of the $r$ rows of $[u]_S$ corresponding to the
elements of $B$. Note that,
right multiplication by $[u]_{B,S}$
represents the linear map that maps
$x$ to the projection of $F_u(x)$ to $V$.
Similarly, we denote by $[u]_{S,B}\in\efq^{m\times r}$ the matrix
consisting of the $r$ columns of $[u]_S$ corresponding to the
elements of $B$. Right multiplication by $[u]_{S,B}$
represents the linear map that projects $x$ to $V$ and multiplies the result by $u$.

As usual, an $[n,k]_q$ code $C$ is a linear code of length $n$ and
dimension $k$, over the field $\ef_q$.
The {\it dual code} of  an $[n,k]_q$ code $C$,
$C^{\ast}\subseteq \efq^n$,
\footnote{We use the superscript ${}^{\ast}$
instead of the conventional notation ${}^{\perp}$
to denote the dual code.
The later is used throughout this paper to denote a
different type of duality that is defined through the
trace map and has a more prominent role in this paper.}
is an $[n,n-k]_q$ code defined by
$$
C^{\ast}\deff\left\{(x_1,\ldots,x_{n})\in \ef_q^n~:~\forall
\mathbf{c}\in C,~\sum_{i=1}^{n} c_i x_i=0\right\}.
$$
\subsection{The Trace Map and the Trace Dual Basis}

The {\it trace map}, $\tr_{q,m}\colon
\efqm\to \efq$, is defined by
$$
\tr_{q,m}(x)\deff x+x^q+x^{q^2}+\cdots+x^{q^{m-1}}.
$$
For ease of notation, we denote the trace map by $\tr$, when $q$ and $m$ are clear from the context.
For a basis of $\efqm$ over $\efq$, $S=\{b_1,\ldots,b_m\}$, the {\it{trace dual basis}} of $S$, $S'=\{b_1',\ldots,b_m'\}$, is a basis of $\efqm$ over $\efq$ for which $\tr(b_i'b_j)=1$ if $i=j$ and $\tr(b_i'b_j)=0$ otherwise. Note that, for every basis there exists a unique trace dual basis.
For $x\in \efqm$ with $x=\sum_{i=1}^mx_i b_i$, $x_i\in \efq$, we have that $x_i=\tr(x b'_i)$, $1\leq i\leq m$.


Let $B=\{b_1,b_2,\ldots,b_r\}\subseteq S$ and let $V=\linspan_{q}(B)$.
The {\it trace-orthogonal} subspace of $V$, $V^{\perp}$, is defined by
$$
V^{\perp}\deff\left\{x\in \efqm~:~\forall v\in V,\ \tr(vx)=0\right\}.
$$
Notice that $\{b_{r+1}',\ldots, b_{m}'\}\subseteq  S$ is a basis for $V^{\perp}$.

\begin{lemma}\label{lemma:rightmult}
Let $S=\{b_1,b_2,\ldots, b_{m}\}$ be a basis of $\efqm$ over $\efq$, let $B\subseteq S$, and let
$V=\linspan_q(S\setminus B)$. For $u\in \efqm$, $w\in V^{\perp}$, and $x\in \efqm$ we have that $x=u\cdot w$ if and only if
$$
\mathbf{w} \cdot [u]_{B,S}= \mathbf{x},
$$
where $\mathbf{w}$ is the vector representation of $w$ according to the basis $B'$ and $\mathbf{x}$ is the vector representation of $x$ according to the basis $S'$.
\end{lemma}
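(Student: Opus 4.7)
The plan is to reduce the claimed equivalence to an entry-wise identity, using the non-degeneracy of the trace pairing to turn ``$x = uw$'' into a family of $m$ scalar equations indexed by the basis $S$. Throughout, I assume without loss of generality that $B=\{b_1,\dots,b_r\}$ and $S\setminus B=\{b_{r+1},\dots,b_m\}$, so that $V=\linspan_q(b_{r+1},\dots,b_m)$, $V^\perp=\linspan_q(b_1',\dots,b_r')=\linspan_q(B')$, and $\mathbf{w}=(w_1,\dots,w_r)$ with $w=\sum_{j=1}^{r} w_j b_j'$.

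The first step is to recall from the discussion of the trace dual basis that if $x=\sum_{i=1}^{m} x_i b_i'$, then $x_i=\tr(b_i\cdot x)$ for every $i$. Consequently $x=uw$ holds in $\efqm$ if and only if
$$
\tr(b_i\cdot uw) \;=\; \tr(b_i\cdot x) \;=\; x_i \qquad \text{for all } i=1,\dots,m,
$$
which is exactly the statement that the vector with $i$-th entry $\tr(b_i\,uw)$ equals $\mathbf{x}$. So it suffices to show that this vector equals $\mathbf{w}\cdot[u]_{B,S}$.

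The second step is a direct computation of $\tr(b_i\,uw)=\tr(w\cdot u b_i)$ using the definition of $[u]_S$. Since right multiplication by $u$ sends $b_i$ to $u b_i=\sum_{j=1}^{m}([u]_S)_{j,i}\,b_j$, linearity of the trace gives
$$
\tr(w\cdot u b_i)\;=\;\sum_{j=1}^{m}\,([u]_S)_{j,i}\,\tr(w\,b_j).
$$
Now I invoke the defining property of $V^\perp$: because $w\in V^\perp$ and $b_j\in V$ for $j>r$, the terms with $j>r$ vanish. For the surviving indices $j\le r$, the trace dual relation $\tr(b_j' b_j)=1$ and $\tr(b_{j'}' b_j)=0$ for $j'\neq j$ gives $\tr(w b_j)=w_j$. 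Hence
$$
\tr(b_i\,uw)\;=\;\sum_{j=1}^{r} w_j\,([u]_S)_{j,i}\;=\;\bigl(\mathbf{w}\cdot[u]_{B,S}\bigr)_i,
$$
where in the last equality I use that $[u]_{B,S}$ is by definition the submatrix of $[u]_S$ on the rows indexed by $B=\{b_1,\dots,b_r\}$.

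Combining the two steps, the $m$ scalar equations $\tr(b_i\,uw)=x_i$ collectively say exactly $\mathbf{w}\cdot[u]_{B,S}=\mathbf{x}$, and each direction of the equivalence $x=uw \Leftrightarrow \mathbf{w}\cdot[u]_{B,S}=\mathbf{x}$ is now immediate. I do not expect a real obstacle here; the only thing to be careful about is the bookkeeping, namely that $\mathbf{w}$ is read off in the basis $B'\subseteq S'$ while $\mathbf{x}$ is read off in the full basis $S'$, and that $V^\perp$ is the span of the $B'$-part of the dual basis rather than of $(S\setminus B)'$. Once that indexing is fixed, the proof is a one-line trace manipulation.
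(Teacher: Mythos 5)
Your proof is correct and is essentially the same argument as the paper's, just organized as a single direct entry-wise computation: the paper first establishes $u_{i,j}=\tr(b_i' b_j u)$ (so $[u]_S^T=[u]_{S'}$) and then appeals to the sparsity of the $S'$-coordinate vector of $w$, whereas you fold both steps into computing $\tr(b_i\,uw)$ directly. The underlying use of the trace dual basis and the vanishing of $\tr(w b_j)$ for $b_j\in S\setminus B$ is identical.
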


\begin{proof}
Let $[u]_S=\left(u_{ij}\right)$. Then for all $1\leq j\leq m$, the $j$th column of $[u]_S$
is the vector representation of $b_j\cdot u$ according to
the basis $S$. Hence, for all $1\leq i\leq m$, $u_{i,j}=Tr(b'_i\cdot b_j\cdot u)$.
Thus, $[u]_S^T=[u]_{S'}$.

Let $\hat{\mathbf{w}}$ be the vector representation of $w$ according to the basis $S'$. 
We have that
$$
\hat{\mathbf{w}}[u]_S=\mathbf{x}~\Leftrightarrow~~[u]_S\hat{\mathbf{w}}^T=\mathbf{x}^T~~\Leftrightarrow~~x=u\cdot w.
$$
Since $w\in V^{\perp}$ it follows that 
$\hat{\mathbf{w}}$ as zero entries in indices corresponding to elements of $S'\setminus B'$ and hence also in indices corresponding to elements of $S\setminus B$.
Thus,
$$
\hat{\mathbf{w}}[u]_S=\mathbf{x}~~\Leftrightarrow~~\mathbf{w}[u]_{B,S}=\mathbf{x},
$$
which concludes the proof.
\end{proof}

Denote by $\hom_{q}(\efqm,\efq)$ the set of all $\efq$-linear functionals from $\efqm$ to $\efq$. The set $\hom_{q}(\efqm,\efq)$ is an $\efq$-linear space. It is well known that $\hom_{q}(\efqm,\efq)$ is isomorphic to $\efqm$. More precisely, any linear functional
$\efqm\to \efq$ is of the form $y\mapsto \tr(xy)$ for a unique $x\in \efqm$.

Finally, denote by $\sigma:\efqm\rightarrow \efqm$ the Frobenius map, defined by $\sigma(x)=x^q$. Notice that $\sigma$ is an $\efq$-linear map.


\subsection{Reed--Solomon Codes}
Let $A=\{a_1,\ldots,a_n\}\subseteq \efq$ be a subset of $n$
elements, $1\leq n\leq q$, and let
$k\leq n$ be a positive integer. The {\it Reed--Solomon (RS)} code,
$\rs(A,k)_{q}$ is defined as
$$
\rs(A,k)_{q} \deff\left\{\left(f(a_1),\ldots,f(a_n)\right)~:~\begin{array}{c}f\in
\efq[X]\\\deg(f)\leq  k-1\end{array}\right\}.
$$
The set $A$ is called the {\it{evaluation set}} of $\rs(A,k)_{q}$ and we say that the code $\rs(A,k)_{q}$ is {\it{evaluated on}} $A$.
The code $\rs(A,k)_q$ is a linear code of length $n$,
dimension $k$, and minimum distance $n-k+1$. Thus, $\rs(A,k)_q$ is
an MDS code and  can correct up to $n-k$ erasures.

For $\mathbf{v}=(v_1,\ldots,v_n)$, $v_i\in \efq \setminus \{0\}$, $1\leq i\leq n$,
the {\it Generalized Reed--Solomon} (GRS) code, $\grs(A,k,\mathbf{v})_{q}$, is defined as
$$
\grs(A,k,\mathbf{v})_{q}\deff\left\{(v_1c_1,\ldots,v_n
c_n)~:~\mathbf{c}\in \rs(A,k)_{q}\right\}.
$$
We refer to the  vector $\mathbf{v}$ as the {\it{GRS scaling vector}} of $\grs(A,k,\mathbf{v})_{q}$.

It is well known that the dual of a GRS code
is yet another GRS code (see, e.g., \cite[Thm.~5.1.6, p.~66]{H}),
$$
\grs(A,k,\mathbf{v})^{\ast} = \grs(A,n-k,\mathbf{v}'),
$$
where $\mathbf{v}'=(v_1',\ldots,v_n')$ is given by
\begin{equation}\label{eq:grsdual}
v_i'=\frac{v_i}{\prod_{j\neq i}(a_i-a_j)},~~~1\leq i\leq n.
\end{equation}

\subsection{Linear Repair Schemes}\label{subsec:linear_repair}
In what follows, we review the definition of a linear repair scheme, and the important result of
Guruswami--Wootters \cite{GW} that provides a criterion to validate a linear repair scheme.
The result of Dau--Milenkovich
(DM)~\cite{DM} on linear repair schemes for RS codes is also given.
In Section~\ref{sec:lrs_rs} we focus only on linear repair scheme of RS codes evaluated on $\efq$-subpaces
of $\efqm$.

For a linear code $C\subset \efqm^n$ and for
$1\leq i\leq n$, an $\efq$-{\it linear repair scheme} for
the $i$th node (coordinate) of codewords in $C$, in which a surviving node transmits at most $r$ $\efq$-symbols, consists of the following.
\begin{enumerate}
\item A set of $\efq$-linear functionals,
$$L=\left\{g_{j,t}\in \hom_{q}(\efqm,\efq)~:~\begin{array}{c}1\leq j\leq n,~j\neq i,\\
1\leq t\leq r\end{array} \right\},
$$
of size $|L|=(n-1)r$.
\item An $\efq$-linear map $f\colon \efq^{|L|}\to \efqm$,
such that for all $(c_1,c_2,\ldots,c_{n})\in C$, we have
\begin{equation}\label{eq:repair}
c_i=f\left(\left(g_{j,t}(c_j)\right)_{\substack{1\leq j\leq n,~j\neq i\\ 1\leq t\leq r  }}\right).
\end{equation}
\end{enumerate}
\begin{remark}
{\rm
It can be easily verified that
if there exists some function $f$ for which~(\ref{eq:repair}) holds,
then there is also an $\efq$-linear map
for which~(\ref{eq:repair}) holds. Hence, there is no loss of
generality in restricting $f$ to be a linear map.
}
\end{remark}

The {\it repair bandwidth}, $b_i$, of the above repair scheme
for node $i$, is defined as $b_i\deff
\log_2(q)\cdot (n-1)r$, which is the total number
of bits transmitted from the helper nodes in
order to repair the erased node $i$.

For a code $C$, the {\it automorphism group} of $C$, $\aut(C)$,
is the set of all permutations $\tau$ of $\{1,\ldots,n\}$, such that
$\tau \cdot C=C$, where for $\mathbf{c}=(c_1,\ldots,c_{n})\in C$, $\tau\cdot
\mathbf{c}\deff (c_{\tau(1)},\ldots,c_{\tau(n)})$.
\footnote{The automorphism group is indeed a group with composition as its group operation.}
The group $\aut(C)$ is called {\it transitive} if for all
$1\leq i,j\leq n$, there exists $\tau\in \aut(C)$ with $\tau(i)=j$.
If $C$ has a transitive automorphism group, then a linear
repair scheme of $C$ for \emph{some} node can be ``permuted'' in order to become a linear repair scheme for \emph{any} node.

In this paper we are interested in linear repair schemes for RS codes evaluated on $\efq$-subspaces.
Henceforth, $U\subseteq \efqm$ is an $\efq$-subspace of dimension $d\leq m$. For a positive integer
$s<d$, we denote by $C(U,s)$ the $RS$ code evaluated on $U$ with codimension $q^s$, i.e.,
$$
C(U,s)\deff\rs(U,q^d-q^s)_{q^m}.
$$

Clearly, $C(U,s)$ is invariant under any permutation that is a translation by an element of $U$, and hence we have the following well-known lemma.
\begin{lemma}\label{lem:transistive}
The code $C(U,s)$ has a transitive automorphism group.
\end{lemma}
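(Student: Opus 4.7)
My plan is to formalize the remark preceding the lemma: for each $u\in U$, the translation $a\mapsto a+u$ induces a permutation $\tau_u$ of the evaluation indices, and these translations act transitively on $U$.

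First, since $U$ is an $\efq$-linear subspace of $\efqm$, it is closed under addition, so for any $u\in U$ the map $\phi_u:U\to U$ defined by $\phi_u(a)=a+u$ is a bijection. Writing $U=\{a_1,\ldots,a_n\}$ with $n=q^d$, I would define $\tau_u\in S_n$ to be the unique permutation satisfying $a_{\tau_u(i)}=a_i+u$ for every $i$.

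Next, I would verify that $\tau_u\in\aut(C(U,s))$. Take any codeword $\mathbf{c}=(f(a_1),\ldots,f(a_n))\in C(U,s)$, where $f\in\efqm[X]$ with $\deg(f)\leq q^d-q^s-1$. Setting $g(X)\deff f(X+u)$, the polynomial $g$ has the same degree as $f$, so in particular $\deg(g)\leq q^d-q^s-1$. Then
$$
\tau_u\cdot\mathbf{c}=\bigl(f(a_{\tau_u(1)}),\ldots,f(a_{\tau_u(n)})\bigr)=\bigl(f(a_1+u),\ldots,f(a_n+u)\bigr)=\bigl(g(a_1),\ldots,g(a_n)\bigr),
$$
which lies in $C(U,s)$. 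Hence $\tau_u$ preserves $C(U,s)$, as required.

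Finally, to establish transitivity, given any $i,j\in\{1,\ldots,n\}$, set $u\deff a_j-a_i\in U$ (using that $U$ is an $\efq$-subspace, hence closed under subtraction). By construction $a_{\tau_u(i)}=a_i+u=a_j$, so $\tau_u(i)=j$. There is no real obstacle here; the only nontrivial ingredient is that the degree of $f(X+u)$ equals the degree of $f(X)$, which ensures that the subspace of polynomials of bounded degree is preserved under translation of the indeterminate.
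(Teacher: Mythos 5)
Your proof is correct and formalizes precisely the one-line justification the paper itself gives (the remark that $C(U,s)$ is invariant under translations by elements of $U$). The paper leaves the lemma as "well-known"; your argument fills in the details along exactly the same lines.
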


%
From Lemma~\ref{lem:transistive} it follows that if $C(U,s)$ has a linear repair scheme for some node $i$, then it has a linear repair scheme for all nodes.

The following theorem by Guruswami--Wootters \cite{GW} plays an important role in
the proof of the Dau--Milenkovich scheme and is also useful for the proof
of the main result of this paper.

\begin{theorem}\label{thm:criterion}
A linear code
$C\subseteq \efqm^n$ has an $\efq$-linear repair scheme for the $i$th node in which every surviving node transmits at most $r$ $\efq$-symbols, if and only if there exist $m$ dual codewords
$\mathbf{u}_{\ell}=(u_{\ell,1},\ldots,u_{\ell,n})\in C^{\ast}$, $1\leq \ell\leq m$, with the following properties.
\begin{enumerate}
\item  $\rank_q(u_{1,j},\ldots,u_{m,j})\leq r$, for all $j\neq i$.
\item $\rank_q(u_{1,i},\ldots,u_{m,i})=m$.
\end{enumerate}

\end{theorem}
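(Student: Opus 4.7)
The plan is to prove the two implications using the identification of $\hom_q(\efqm, \efq)$ with $\efqm$ via the trace, together with the existence of a trace-dual basis, both recalled just before the theorem.

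For the sufficient direction (``if''), I would build the repair scheme explicitly from the dual codewords $\mathbf{u}_1, \ldots, \mathbf{u}_m \in C^{\ast}$. For each surviving node $j \ne i$, let $W_j := \linspan_q\{u_{1, j}, \ldots, u_{m, j}\}$, which by condition (1) has $\efq$-dimension at most $r$. I would pick a basis $\{w_{j, 1}, \ldots, w_{j, r_j}\}$ of $W_j$, set $w_{j, t} := 0$ for $r_j < t \le r$, and declare the functionals $g_{j, t}(y) := \tr(w_{j, t}\, y)$. Writing each $u_{\ell, j}$ as an $\efq$-combination of the $w_{j, t}$, the value $\tr(u_{\ell, j} c_j)$ is computable from the $r$ symbols $g_{j, t}(c_j)$ that node $j$ transmits. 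Since $\mathbf{u}_\ell \in C^{\ast}$ gives $\sum_j u_{\ell, j} c_j = 0$, the quantity $\tr(u_{\ell, i} c_i) = -\sum_{j \ne i} \tr(u_{\ell, j} c_j)$ is also computable. Condition (2) then says that the $\efq$-linear map $\Phi : c_i \mapsto (\tr(u_{\ell, i} c_i))_{\ell = 1}^m$ from $\efqm$ to $\efq^m$ has trivial kernel (its kernel is the trace-orthogonal of $\efqm$, namely $\{0\}$), hence is bijective, and $c_i$ is recovered by the $\efq$-linear map $f := \Phi^{-1}$.

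For the necessary direction (``only if''), I would fix an $\efq$-basis $\{b_1, \ldots, b_m\}$ of $\efqm$ with trace-dual basis $\{b_1', \ldots, b_m'\}$ and represent each given functional as $g_{j, t}(y) = \tr(\lambda_{j, t}\, y)$ for a unique $\lambda_{j, t} \in \efqm$. The repair scheme expresses each $\efq$-coordinate of $c_i$, namely $\tr(b_\ell' c_i)$, as some $\efq$-combination $\sum_{j \ne i, t} \alpha_{\ell, j, t} g_{j, t}(c_j)$ valid for every $\mathbf{c} \in C$. I then define candidate dual codewords by $u_{\ell, i} := b_\ell'$ and $u_{\ell, j} := -\sum_t \alpha_{\ell, j, t} \lambda_{j, t}$ for $j \ne i$. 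Both rank conditions are then immediate: condition (2) holds because $\{b_\ell'\}_\ell$ is itself a basis, and condition (1) holds because each $u_{\ell, j}$ lies in $\linspan_q\{\lambda_{j, 1}, \ldots, \lambda_{j, r}\}$.

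The step I expect to be the main obstacle is upgrading the identity $\tr\bigl(\sum_j u_{\ell, j} c_j\bigr) = 0$, which follows directly from the construction above, to the full equality $\sum_j u_{\ell, j} c_j = 0$ required for $\mathbf{u}_\ell \in C^{\ast}$. The resolution uses that $C$ is linear over $\efqm$, not merely over $\efq$: for any $\xi \in \efqm$ the vector $\xi \mathbf{c}$ again lies in $C$, so applying the trace identity to $\xi \mathbf{c}$ yields $\tr\bigl(\xi \sum_j u_{\ell, j} c_j\bigr) = 0$ for every $\xi \in \efqm$, and non-degeneracy of the trace pairing on $\efqm$ then forces $\sum_j u_{\ell, j} c_j = 0$, as desired.
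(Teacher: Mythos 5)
The paper does not prove this theorem; it is stated and cited from Guruswami and Wootters~\cite{GW}, so there is no in-paper proof to compare against. Your argument is correct and reproduces the standard Guruswami--Wootters proof: identifying $\hom_q(\efqm,\efq)$ with $\efqm$ via the nondegenerate trace pairing, using the dual codewords to express each $\tr(u_{\ell,i}c_i)$ as an $\efq$-linear combination of the transmitted symbols, inverting the resulting map $\Phi$ via the rank-$m$ condition, and, in the converse, upgrading $\tr\bigl(\sum_j u_{\ell,j}c_j\bigr)=0$ to $\sum_j u_{\ell,j}c_j=0$ by scaling $\mathbf{c}$ by arbitrary $\xi\in\efqm$ and invoking nondegeneracy of the trace.
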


\begin{remark}
As observed in \cite{GW}, a
repair scheme for \emph{one} GRS scaling vector is automatically
also a repair scheme for \emph{all} GRS scaling vectors. In detail,
a repair scheme for $\grs(A,k,\mathbf{v})$, can be converted to a
repair scheme for
$\grs(A,k,\mathbf{v}')$ in the following obvious way. When working with
the latter code, each surviving node $j$ multiplies its content by
$v_j/v'_j$, before using the existing repair scheme, and then the
repaired value of the $i$th node is multiplied by $v'_i/v_i$. In
particular, when repairing RS codes, we may assume without loss of generality that the
dual code in the criterion of Theorem \ref{thm:criterion} is also an RS
code. When this sort of argument will be used ahead, we will say
that some relevant vectors are in the dual code \emph{up
to GRS scaling}.
\end{remark}

%
%

As mentioned in the introduction, the main
result of this paper can be viewed as a generalization
of the Dau--Milenkovich
(DM)~\cite{DM} scheme. 
The DM scheme is given in the following theorem.
\begin{theorem}\label{thm:DM}
  For a set $A\subseteq \efqm$ of size $n$, where $q^{s}<n\leq q^m$ the code
  $RS(A,n-q^s)_{q^m}$ has a linear repair scheme in which each surviving node has to transmit $m-s$
$\efq$-symbols for the repair of the erased node.
\end{theorem}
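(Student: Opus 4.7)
My plan is to invoke the Guruswami--Wootters criterion (Theorem~\ref{thm:criterion}): it suffices to exhibit $m$ dual codewords $\mathbf{u}_1,\ldots,\mathbf{u}_m$ such that at the erased coordinate the entries form an $\efq$-basis of $\efqm$, while at every surviving coordinate the $\efq$-rank of the entries is at most $m-s$. Since the dual of an RS code is a GRS code, and since a repair scheme for one GRS scaling vector is automatically one for all GRS scaling vectors, I may work \emph{up to GRS scaling} and take the $\mathbf{u}_\ell$ to be evaluations on $A$ of polynomials in $\efqm[X]$ of degree at most $q^s-1$.

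The key device is the subspace polynomial. Fix an $\efq$-subspace $W\subseteq\efqm$ of dimension $s$, and set
\[
 p_W(Y)\deff\prod_{w\in W}(Y-w)=\sum_{t=0}^{s} c_t\, Y^{q^t},
\]
a $q$-linearized polynomial whose $\efq$-linear kernel on $\efqm$ is exactly $W$. Since the roots of $p_W$ are simple, the linear coefficient $c_0$ is nonzero. Let $a^{*}$ be the evaluation point of the failed node, fix an $\efq$-basis $\zeta_1,\ldots,\zeta_m$ of $\efqm$, and define
\[
 p_\ell(X)\deff\frac{p_W\!\bigl(\zeta_\ell(X-a^{*})\bigr)}{X-a^{*}}
 =\sum_{t=0}^{s} c_t\,\zeta_\ell^{q^t}(X-a^{*})^{q^t-1},\qquad 1\leq\ell\leq m.
\]
Each $p_\ell$ is a polynomial of degree $q^s-1$, so $\mathbf{u}_\ell\deff\bigl(p_\ell(a_1),\ldots,p_\ell(a_n)\bigr)$ lies in the dual code up to GRS scaling.

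It remains to verify the rank conditions. At the failed node, $p_\ell(a^{*})=c_0\zeta_\ell$; since $c_0\neq 0$ and $\{\zeta_\ell\}$ is an $\efq$-basis, the tuple $(p_1(a^{*}),\ldots,p_m(a^{*}))$ has $\efq$-rank $m$. At a surviving node $a_j$, set $y\deff a_j-a^{*}\neq 0$. Then $p_\ell(a_j)=y^{-1}p_W(\zeta_\ell y)$, so the $\efq$-span of $\{p_\ell(a_j)\}_{\ell=1}^m$ equals $y^{-1}$ times the image of the $\efq$-linear map $T_y\colon\zeta\mapsto p_W(\zeta y)$ on $\efqm$. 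The kernel of $T_y$ is $\{\zeta\in\efqm:\zeta y\in W\}=y^{-1}W$, an $\efq$-subspace of dimension $s$. Hence $\dim_{\efq}\im T_y=m-s$, giving the required rank bound, and the Guruswami--Wootters criterion yields a linear repair scheme with per-node bandwidth $m-s$.

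The only genuinely delicate point is the choice of $p_\ell$: one needs a $q$-linearized polynomial whose Frobenius exponents stay within $\{0,\ldots,s\}$ (for the degree bound), whose $\efq$-kernel on $\efqm$ has full dimension $s$ (for the rank bound at surviving nodes), and whose $Y$-coefficient is nonzero (so that the rank at the failed node is $m$). The subspace polynomial simultaneously achieves all three; once it is written down, the rest of the argument is essentially bookkeeping.
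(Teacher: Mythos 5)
Your proof is correct and matches the standard Dau--Milenkovich argument that the paper invokes: fix an $s$-dimensional $\efq$-subspace $W$, take its $q$-linearized subspace polynomial $p_W$ (separable, so $c_0\neq 0$), and verify the Guruswami--Wootters criterion of Theorem~\ref{thm:criterion} for the dual codewords, up to GRS scaling, given by $p_\ell(X)=p_W(\zeta_\ell(X-a^{*}))/(X-a^{*})$, with the rank bound at a surviving node coming from $\dim_{\efq}\ker T_y=s$. The paper does not reproduce this proof (it cites~\cite{DM}); the same subspace/image-polynomial device reappears in its proof of Proposition~\ref{prop:bad_example}, so your route is consistent with the paper's framework.
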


For an $\efq$-linear subspace $U$ of dimension $d$ and for $1\leq s<d$, the result of Dau and Milenkovich given in Theorem~\ref{thm:DM} states that the code $C(U,s)$ has a linear repair scheme in which each helper node transmits at most $r=m-s$ $\efq$-symbols for the repair of the erased node.
The main contribution of this paper is to show that a lower value of $r$ can be used for the same $s$; in fact, $r$ can be as low as
$m(d-s)/d)$ (with some restrictions on the choice of $U$ for the case $q=2$ and $ms=d(m-r)$).

\section{Linear Repair Schemes For RS Codes Evaluated on $\efq$-Linear Subspaces}\label{sec:lrs_rs}
In this section we introduce
some results
that will be useful in Section~\ref{sec:mainres}, where we present and prove the main result of this paper.

The result of Guruswami--Wooters, presented in Theorem~\ref{thm:criterion}, provides a criterion
to determine if a linear code has a linear repair scheme.
For the code $C(U,s)$, the following proposition provides
an equivalent criterion for the existence of a linear repair scheme that will be useful for the proof of our main theorem.

\begin{proposition}\label{prop:linbymat}
Let $V\subset \efqm$ be an $\efq$-linear subspace of dimension $r$, let
$B_1$ be a basis of $V$, and let $B_2$ be a set of $m-r$ vectors, such that
$S=B_1\cup B_2$ is a basis of $\efqm$ over $\efq$. For a basis $\{u_1,u_2,\ldots,u_d\}$ of $U$,
consider the matrix $M\in \efq^{d(m-r)\times m(s+1)}$ defined by
\begin{equation}\label{eq:matM}
M\deff
\left(\begin{array}{c|ccc}
[u_1]_{B_2,S} &[u_1^q]_{B_2,S} & \cdots &

[u_1^{q^s}]_{B_2,S}\\

[u_2]_{B_2,S} & [u_2^q]_{B_2,S} & \cdots &
[u_2^{q^s}]_{B_2,S}\\

\vdots & \vdots & \ddots & \vdots \\

\undermat{m}{[u_d]_{B_2,S}} &
\undermat{ms}{[u_d^q]_{B_2,S} & \cdots &
[u_d^{q^s}]_{B_2,S}} \\
\end{array}\right)
.
\end{equation}
$$
{}
$$

Write $M=(M_1|M_2)$, where $M_1$ consists of the first $m$ columns of
$M$, and $M_2$ consists of the remaining $ms$ columns. If the column space of $M_1$ is contained in the column space of $M_2$, then $C(U,s)$ has an $\efq$-linear repair scheme in which each surviving node has to transmit at most
$r$ $\efq$-symbols.
\end{proposition}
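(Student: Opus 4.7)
My plan is to verify the Guruswami--Wootters criterion of Theorem~\ref{thm:criterion}. By Lemma~\ref{lem:transistive} the code $C(U,s)$ has a transitive automorphism group, so without loss of generality I will construct a repair scheme for the node corresponding to $0 \in U$; that is, I will exhibit $m$ dual codewords $\mathbf{u}_1, \ldots, \mathbf{u}_m \in C(U,s)^{\ast}$ whose vector of values at $0$ spans $\efqm$ over $\efq$ and whose vector of values at each $u \in U \setminus \{0\}$ spans an $\efq$-subspace of dimension at most $r$. Since $L_U(X) := \prod_{u \in U}(X - u)$ is an $\efq$-linearized polynomial, its formal derivative is a nonzero constant, so Eq.~(\ref{eq:grsdual}) gives that $C(U,s)^{\ast}$ coincides with $\rs(U, q^s)_{q^m}$ up to a single nonzero GRS scalar; by the remark following Theorem~\ref{thm:criterion} it therefore suffices to produce $m$ polynomials in $\efqm[X]$ of degree less than $q^s$ whose evaluation vectors on $U$ meet the two rank conditions.

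The first step will be to unpack the hypothesis algebraically. The $(i, b_\ell)$-entry in the $(b_k, j)$-th column of $M$ equals $\tr(b'_\ell\, u_i^{q^j}\, b_k)$ for $b_\ell \in B_2$. Writing the assumed inclusion of column spaces as a matrix identity $M_1 = M_2 N$, I will gather, for each target column index $k^* \in \{1, \ldots, m\}$, the corresponding column of $N$ into elements
\[
\gamma_j^{(k^*)} := \sum_{k=1}^{m} N_{(b_k, j),\, b_{k^*}}\, b_k \in \efqm, \qquad j \in \{1, \ldots, s\}.
\]
The column identity then takes the form $\tr(b'_\ell(u_i b_{k^*} - \sum_{j=1}^{s} u_i^{q^j}\, \gamma_j^{(k^*)})) = 0$ for every $b_\ell \in B_2$ and every $i \in \{1, \ldots, d\}$. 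Since the trace-annihilator of $\{b'_\ell : b_\ell \in B_2\}$ is precisely $V = \linspan_q(B_1)$, this rewrites as $u_i b_{k^*} - \sum_{j=1}^{s} u_i^{q^j}\, \gamma_j^{(k^*)} \in V$ for every $i$; and because both $u \mapsto u\, b_{k^*}$ and $u \mapsto u^{q^j}\, \gamma_j^{(k^*)}$ are $\efq$-linear maps from $\efqm$ to $\efqm$, the membership propagates to every $u \in U$.

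The crux of the argument will be to divide by $X$. I set
\[
T_{k^*}(X) := b_{k^*} - \sum_{j=1}^{s} \gamma_j^{(k^*)}\, X^{q^j - 1},
\]
a polynomial of degree at most $q^s - 1$ satisfying $X \cdot T_{k^*}(X) = X b_{k^*} - \sum_{j=1}^{s} \gamma_j^{(k^*)}\, X^{q^j}$. Since $\deg T_{k^*} < q^s$, the evaluation vector $(T_{k^*}(u))_{u \in U}$ is a dual codeword up to a GRS scaling that is irrelevant to $\efq$-ranks. I will then verify the two conditions of Theorem~\ref{thm:criterion} directly: at $u = 0$ one has $T_{k^*}(0) = b_{k^*}$, so the vector $(T_{k^*}(0))_{k^* = 1}^{m} = (b_1, \ldots, b_m)$ has $\efq$-rank $m$ since $S$ is an $\efq$-basis of $\efqm$; and at each $u \in U \setminus \{0\}$ the algebraic relation yields $T_{k^*}(u) = u^{-1}(u b_{k^*} - \sum_{j=1}^{s} u^{q^j}\, \gamma_j^{(k^*)}) \in u^{-1} V$, an $r$-dimensional $\efq$-subspace of $\efqm$, so the vector $(T_{k^*}(u))_{k^*=1}^{m}$ has $\efq$-rank at most $r$. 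The principal obstacle I anticipate is precisely spotting this factor of $X$: the natural polynomial suggested by $M$ has degree $q^s$, one unit too large to be a dual codeword, and the role of the hypothesis on column spaces is exactly to force the $j = 0$ block of $M$ into the span of the higher Frobenius blocks, guaranteeing that $X b_{k^*} - \sum_j \gamma_j^{(k^*)} X^{q^j}$ is divisible by $X$ and that the resulting quotient lands in the scaled copy $u^{-1} V$ of $V$ on all of $U \setminus \{0\}$.
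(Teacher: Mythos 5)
Your proof is correct and follows essentially the same route as the paper's: both reduce to node $0$ via transitivity, translate the column-space inclusion into the existence, for each $b_{k^*}\in S$, of an $\efq$-linearized polynomial of the form $b_{k^*}X+\sum_{j=1}^{s}a_j X^{q^j}$ mapping $U$ into $V$, then divide by $X$ to obtain degree-$<q^s$ polynomials whose evaluation vectors land in $u^{-1}V$ off zero and span $\efqm$ at zero, and finish with Theorem~\ref{thm:criterion}. The only difference is cosmetic: you unwind the matrix condition entrywise via traces and an explicit coefficient matrix $N$, whereas the paper interprets $\efq$-linear combinations of the columns directly as the projections $((a_\ell u_i)_W)_i$; your explicit remark that $L_U'(X)$ is a nonzero constant, so the GRS scaling is trivial, is a detail the paper leaves implicit.
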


\begin{proof}
Assume that the column space of $M_1$ is contained in the
column space of $M_2$.
We will prove the existence of a linear repair scheme for the node corresponding to evaluation on $0\in U$ and by Lemma~\ref{lem:transistive}, conclude the existence of a linear repair scheme for all nodes.

Let $W=\linspan_{q}(B_2)$. Any linear combination of the columns of $M_1$ over $\efq$ can be interpreted as
a vector of the form
$$((a_0 u_1)_W, (a_0 u_2)_W, \ldots, (a_0 u_d)_W)^T,$$
for some $a_0\in \efqm $ (recall that $x_{W}$ is the projection of $x$ to $W$). Similarly, any linear combination of the columns of $M_2$ over $\efq$ can be interpreted as a vector of the form
$$\left(\left(\sum_{\ell=1}^sa_{\ell}u_1^{q^{\ell}}\right)_W,
\left(\sum_{\ell=1}^sa_{\ell}u_2^{q^{\ell}}\right)_W,\ldots, \left(\sum_{\ell=1}^sa_{\ell}u_d^{q^{\ell}}\right)_W\right)^T,
$$
for some $a_1, a_2,\ldots, a_s\in \efqm$.

Since the column space of $M_1$ is contained in the
column space of $M_2$, it follows that for every $a_0\in \efqm$, there exist $a_1,\ldots,a_s\in \efqm$ such
that $(a_0u)_{W}=-(a_1u^q+\cdots +a_su^{q^s})_{W}$, for all $u\in U$. Equivalently, the polynomial
$$
f(X)= a_0X+a_1X^q+\cdots+ a_sX^{q^s}
$$ satisfies that $f(u)_W=0$, for all $u\in U$, and hence
$f(U)\subseteq V$.

In particular, if we write $S=\{b_1,\ldots,b_m\}$, then for every $1\leq j\leq m$,
there exist $a_{j,1},a_{j,2},\ldots,a_{j,s}$, such that
$$
f_{j}(X)\deff b_jX+a_{j,1}X^{q}+\cdots+a_{j,s}X^{q^s}
$$
maps $U$ to $V$.

For $1\leq j\leq m$, set
$$
g_j(X)\deff f_j(X)/X.
$$
Then for all $1\leq j\leq m$, $\deg(g_j)\leq q^s-1$ and hence, the evaluation of $g_j$ on $U$ is a codeword of $C(U,s)^{\ast}$, $\mathbf{x}_j=(x_{j,u})_{u\in U}$.
Now, for all $u\in U\smallsetminus\{0\}$, we have
$$
\{x_{j,u}\}_{j=1}^m=\left\{f_j(u)/u\right\}_{j=1}^{m}
\subseteq \frac{1}{u}\cdot V,
$$
so that $\rank_q\left(\left\{x_{j,u}\right\}_{j=1}^m\right)\leq \dim(V)=r$.
Moreover, since $x_{j,0}=b_j$, for all $1\leq j\leq m$, we have that
$$
\rank_q(\left\{x_{j,0}\right\}_{j=1}^m)=\rank_q(S)=m.
$$

The proof follows From Theorem~\ref{thm:criterion}.

\end{proof}

For $M_1,M_2$ defined in Proposition~\ref{prop:linbymat}, a sufficient condition that the column space of $M_1$ is contained in the column space of $M_2$
is that the column space of $M_2$ is equal to $\efq^{d(m-r)}$, or equivalently,
$M_2$ is of full rank and $ms\geq d(m-r)$.
\begin{definition}
A pair $(U,V)$ of $\efq$-linear subspaces of dimensions $d$ and $r$, respectively, is called a {\it good} pair, if the corresponding matrix 
$M_2$ is of full rank and $ms\geq d(m-r)$.
\end{definition} 
Notice that, although the matrix $M$ is defined through
a basis $B_1$ for $V$ and some completion of $B_1$ to a basis $S$ for
$\efqm$ over $\efq$, the goodness of the pair $(U,V)$ does not depend on the choice of these bases.


\begin{lemma}\label{lem:basisU} The goodness of the pair $(U,V)$
does not depend on the choice of the basis $\{u_1,u_2,\ldots,u_d\}$ for $U$.
\end{lemma}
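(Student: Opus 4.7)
The plan is to exploit the $\efq$-linearity of the maps $u \mapsto u^{q^\ell}$ and $u \mapsto [u]_{B_2,S}$ to show that changing the basis of $U$ acts on $M$ (and hence on $M_2$) by an invertible block-row transformation, which preserves rank.

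First, I would note two linearity facts. The Frobenius map $\sigma\colon x \mapsto x^q$ is $\efq$-linear, because $(\lambda x + \mu y)^q = \lambda^q x^q + \mu^q y^q = \lambda x^q + \mu y^q$ for $\lambda, \mu \in \efq$. Iterating, each $u \mapsto u^{q^\ell}$ is $\efq$-linear. Moreover, the assignment $u \mapsto F_u$ from $\efqm$ to $\mathrm{End}_{\efq}(\efqm)$ is $\efqm$-linear, hence $\efq$-linear, so $u \mapsto [u]_S$, and by restriction to rows, $u \mapsto [u]_{B_2,S}$, is $\efq$-linear.

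Next, I would translate these into a block-row transformation of $M$. Let $\{u'_1,\ldots,u'_d\}$ be another $\efq$-basis of $U$ and write $u'_i = \sum_{j=1}^{d} c_{ij} u_j$ with $C = (c_{ij}) \in \gl_d(\efq)$. Denote by $R_j \in \efq^{(m-r) \times m(s+1)}$ the $j$th block-row of $M$ (corresponding to $u_j$), so that $M$ is the vertical stack of $R_1,\ldots,R_d$. For each $0 \le \ell \le s$, the $\ell$th sub-block of the $i$th block-row of $M'$ (the matrix built from the new basis) is
\[
[(u'_i)^{q^\ell}]_{B_2,S} = \Bigl[\sum_j c_{ij}^{q^\ell} u_j^{q^\ell}\Bigr]_{B_2,S} = \sum_j c_{ij}\,[u_j^{q^\ell}]_{B_2,S},
\]
using $c_{ij}^{q^\ell} = c_{ij}$ for $c_{ij} \in \efq$ together with the two linearity facts above. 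Assembling the sub-blocks, the $i$th block-row of $M'$ equals $\sum_j c_{ij} R_j$. Hence
\[
M' = (C \otimes I_{m-r})\, M,
\]
where $C \otimes I_{m-r}$ denotes the $d(m-r) \times d(m-r)$ block matrix with $(i,j)$ block equal to $c_{ij} I_{m-r}$.

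Finally, since $C$ is invertible over $\efq$, so is $C \otimes I_{m-r}$. Splitting $M' = (M'_1 \mid M'_2)$ in the same way as $M = (M_1 \mid M_2)$, we obtain $M'_2 = (C \otimes I_{m-r}) M_2$, and therefore $\rank_q(M'_2) = \rank_q(M_2)$. The condition $ms \ge d(m-r)$ does not involve the basis of $U$, so the full-rank condition on $M_2$, and hence goodness of the pair $(U,V)$, is independent of the chosen basis $\{u_1,\ldots,u_d\}$. The argument is essentially bookkeeping once the two linearity observations are in place, so there is no substantive obstacle; the only subtlety worth flagging is the use of $c_{ij}^{q^\ell} = c_{ij}$, which is exactly where the hypothesis $c_{ij}\in\efq$ (rather than $\efqm$) is needed.
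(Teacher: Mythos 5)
Your proof is correct and follows essentially the same route as the paper: both express the change of basis as left multiplication by $A\otimes I_{m-r}$ (your $C\otimes I_{m-r}$), an invertible matrix over $\efq$, and conclude that the rank of $M_2$ is unchanged. You spell out the two linearity facts and the use of $c_{ij}^{q^\ell}=c_{ij}$ more explicitly than the paper does, but there is no substantive difference in the argument.
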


\begin{proof}
Let $\{w_1,w_2,\ldots, w_d\}$ be another basis for $U$ and let $A=(a_{i,j})\in \efq^{d\times d}$ be the non-singular
matrix such that
$$
w_i=\sum_{j=1}^da_{i,j}u_j,
$$
for all $1\leq i\leq d$.
Consider the matrix $\tilde{A}=A\otimes I_{m-r}$,
where $I_{m-r}$ is the $(m-r)\times(m-r)$ identity matrix
and the operation $\otimes$ is the tensor product of matrices.
Then $\tilde{A}$ is a $d(m-r)\times d(m-r)$ non-singular
matrix and hence the matrix $\tilde{A}\cdot M_2$ has the same rank as $M_2$.
The proof of the Lemma follows from the fact that
$$
\tilde{A}\cdot M_2=\left(\begin{array}{ccc}
[w_1^q]_{B_2,S} & \cdots &

[w_1^{q^s}]_{B_2,S}\\

 [w_2^q]_{B_2,S} & \cdots &
[w_2^{q^s}]_{B_2,S}\\

\vdots & \ddots & \vdots \\

[w_d^q]_{B_2,S} & \cdots &
[w_d^{q^s}]_{B_2,S} \\
\end{array}\right)
$$

\end{proof}

For $x_1,x_2,\ldots,x_{\ell}\in \efqm$, define
$$
T(x_1,\ldots,x_{\ell};s)\deff\begin{pmatrix}
x_1^q & x_2^q & \cdots &x_{\ell}^q\\
x_1^{q^2}& x_2^{q^2}& \cdots & x_{\ell}^{q^2}\\
\vdots & \vdots & \ddots & \vdots\\
x_1^{q^s}& x_2^{q^s}&\cdots & x_{\ell}^{q^s}
\end{pmatrix}\in \efqm^{s\times{\ell}}.
$$

\begin{proposition}\label{prop:rank}
For every two positive integers $\ell,s$ and for $x_1,\ldots,x_{\ell}\in \efqm$, if $\rho=\rank_q(x_1,\ldots,x_{\ell})$
then the rank of $T(x_1,\ldots,x_{\ell};s)$ (over $\efqm$) is $\min\{s,\rho\}$.
\end{proposition}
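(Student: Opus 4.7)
The plan is to prove $\rank_{\efqm} T(x_1,\dots,x_\ell;s) = \min\{s,\rho\}$ by establishing the matching upper and lower bounds separately.

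For the upper bound $\rank_{\efqm} T \leq \min\{s,\rho\}$, I would first note trivially that $\rank_{\efqm} T \leq s$ since $T$ has only $s$ rows. For $\rank_{\efqm} T \leq \rho$, reorder so that $x_1,\dots,x_\rho$ form an $\efq$-basis of $\linspan_q(x_1,\dots,x_\ell)$. Then each $x_j$ with $j > \rho$ can be written as $x_j = \sum_{i=1}^{\rho} \lambda_{j,i} x_i$ with $\lambda_{j,i}\in\efq$. Applying the Frobenius $\sigma^k$ and using that $\sigma$ is $\efq$-linear and fixes $\efq$ pointwise, we get $x_j^{q^k} = \sum_{i=1}^{\rho} \lambda_{j,i} x_i^{q^k}$ for every $k$. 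Hence the $j$th column of $T$ is an $\efq$-linear (and therefore $\efqm$-linear) combination of columns $1,\dots,\rho$, giving $\rank_{\efqm} T \leq \rho$.

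For the lower bound $\rank_{\efqm} T \geq \min\{s,\rho\}$, the main tool is the classical Moore determinant theorem: for $y_1,\dots,y_t \in \efqm$, the Moore matrix $(y_j^{q^{i-1}})_{1\leq i,j\leq t}$ has nonzero determinant if and only if $y_1,\dots,y_t$ are $\efq$-linearly independent. I would split into two cases. If $s \leq \rho$, pick any $s$ columns $i_1,\dots,i_s$ corresponding to $\efq$-linearly independent $x_{i_j}$'s; the resulting $s\times s$ submatrix is exactly the Moore matrix of $y_j \deff x_{i_j}^q = \sigma(x_{i_j})$. Since $\sigma$ is an $\efq$-linear bijection fixing $\efq$, the $y_j$ are $\efq$-linearly independent, so the determinant is nonzero and $\rank_{\efqm} T \geq s$. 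If $s \geq \rho$, keep the $\rho$ columns corresponding to an $\efq$-basis of the $x_i$'s and take the top $\rho$ rows; the resulting $\rho\times\rho$ submatrix is the Moore matrix of $x_1^q,\dots,x_\rho^q$, whose entries are again $\efq$-linearly independent, so its determinant is nonzero and $\rank_{\efqm} T \geq \rho$.

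There is no real obstacle in this argument beyond bookkeeping; the two cases combine to give $\rank_{\efqm} T \geq \min\{s,\rho\}$, which together with the upper bound yields equality. The only subtle point to state cleanly is that the Frobenius $\sigma$ induces an $\efq$-linear automorphism of $\efqm$, so $\efq$-independence is preserved under $\sigma^k$, which is what allows the Moore matrix in the $s\leq\rho$ case to be indexed starting from the $q^1$-power row rather than the $q^0$-power row.
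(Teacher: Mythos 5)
Your proof is correct, and it takes a genuinely different route from the paper's. The paper reduces to a basis $\{y_1,\ldots,y_\rho\}$ (so $T$ and $T_1=T(y_1,\ldots,y_\rho;s)$ have the same rank), handles $s>\rho$ by restricting to the top $\rho$ rows, and then proves directly that the rows of $T_1$ are $\efqm$-linearly independent for $s\le\rho$: a vanishing combination $\mathbf{a}T_1=\mathbf{0}$ yields a $q$-linearized polynomial $g(X)=a_1X^q+\cdots+a_sX^{q^s}$ vanishing on the $q^\rho$-element space $\linspan_q(y_1,\ldots,y_\rho)$, and after ``unshifting'' via $f(X)=a_1^{q^{m-1}}X+\cdots+a_s^{q^{m-1}}X^{q^{s-1}}$ with $f^q=g$, the bound $\deg f=q^{s-1}<q^\rho$ forces $f=0$. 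You instead split explicitly into upper and lower bounds and invoke the Moore determinant theorem for the lower bound, handling the index shift (rows of $T$ start at $q^1$ rather than $q^0$) by replacing the column elements with their Frobenius images $x_{i_j}^q$, which is clean and correct since $\sigma$ is an $\efq$-linear automorphism and therefore preserves $\efq$-linear independence. The underlying content is the same --- the paper's root-counting step is essentially a proof of the Moore nonvanishing fact adapted to the shifted exponent range --- but your version is shorter and more modular because it leans on the standard theorem, whereas the paper's is self-contained. Both resolve the same subtle point (the $q^1$-to-$q^s$ shift) in dual ways: you shift the \emph{column entries} by $\sigma$, the paper shifts the \emph{coefficients} by $\sigma^{-1}$.
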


\begin{proof}
Let $\{y_1,\ldots,y_{\rho}\}\subseteq\{x_1,\ldots,x_{\ell}\}$ be a
basis for $\linspan_{q}(x_1,\ldots,x_{\ell})$. Since the Frobenius map, $\sigma_q(x)\equiv x^q$, is an
$\efq$-linear map, it follows that all columns of
$T=T(x_1,\ldots,x_{\ell};s)$ are linear combinations
of those of $T_1=T(y_1,\ldots,y_{\rho};s)$. Hence
$\rank(T)=\rank(T_1)$ and it is sufficient to prove that
$\rank(T_1)=\min\{s,\rho\}$.
For this, it is sufficient to consider the case where $s\leq \rho$,
because for $s\geq \rho+1$, $T(y_1,\ldots,y_{\rho};\rho)$ appears in the
first rows of $T_1$ and if $T(y_1,\ldots,y_{\rho};\rho)$ is of full rank than
$\rank(T_1)=\rho$.

For a vector $\mathbf{a}=(a_1,\ldots,a_s)\in\efqm^s$ such that
$\mathbf{a}T_1=0$, consider the polynomial
$$
g(X)=a_1X^q+a_2X^{q^2}+\cdots+a_sX^{q^{s}}.
$$
Then $y_1,y_2,\ldots,y_{\rho}$ are all roots of $g(X)$ and
since $g(X)$ is an $\efq$-linear map, it follows that all elements of
$\linspan_{q}(y_1,\ldots,y_{\rho})$ are roots of $g(X)$.
Let $f(X)\in \efqm[X]$ be the polynomial
$$
f(X)=a_1^{q^{m-1}}X+a_2^{q^{m-1}}X^q+\cdots+ a_s^{q^{m-1}}X^{q^{s-1}}.
$$
Since the Frobenius map is an $\efq$-linear map, and since
$a^{q^m}=a$, for all $a\in\efqm$, it follows that
$f(X)^q=g(X)$. Hence,
all the roots of $g(X)$ are roots of $f(X)$ as well and
$f(X)$ has at least $q^{\rho}$ roots. However, the degree of $f(X)$ is
$q^{s-1}<q^{\rho}$. It follows that $f(X)$ must be the zero
polynomial and
$a_{\ell}^{q^{m-1}}=a_{\ell}=0$, for all $1\leq \ell \leq s$.

We showed that if $s\leq \rho$ then the rows of $T_1$
are linearly independent over $\efqm$, which concludes the proof.
\end{proof}
%


The following two propositions provide useful characterizations of a
good pair of subspaces $(U,V)$.
\begin{proposition}\label{prop:good}
The following conditions are equivalent.
\begin{itemize}
\item[$1)$] The pair $(U,V)$ is good.
\item[$2)$] For every basis $\{u_1,\ldots,u_d\}$ of $U$ and
for all $v_1',\ldots,v_d'\in
V^{\perp}$ for which
\begin{equation}\label{eq:onemorevg1}
T(u_1,u_2,\ldots,u_d;s)\cdot (v_1',v_2',\ldots, v_d')^T
=\mathbf{0}
\end{equation}
we have that  $v_1'=v_2' =\cdots = v_d'=0$.
\item[$3)$] For every basis $B_2'=\{b_1',\ldots,b_{m-r}'\}$ of
$V^{\perp}$ and for all $w_1,\ldots,w_{m-r}\in U$ for which
\begin{equation}\label{eq:sform1}
T(w_1,w_2,\ldots, w_{m-r};s)\cdot
(b_1,b_2',\ldots,b_{m-r}')^T=\mathbf{0}
\end{equation}
we have that $w_1=w_2 =\cdots = w_{m-r}=0$.
\end{itemize}
\end{proposition}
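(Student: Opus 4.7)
The plan is to establish $1 \Leftrightarrow 2$ and $2 \Leftrightarrow 3$ separately. For $1 \Leftrightarrow 2$, I would exploit Lemma~\ref{lemma:rightmult} directly. Parse a candidate left-kernel vector of $M_2$ as a block vector $(\mathbf{x}_1, \ldots, \mathbf{x}_d)$ with $\mathbf{x}_i \in \efq^{m-r}$, and view each $\mathbf{x}_i$ as the coordinate vector, in the trace-dual basis $B_2' = \{b_{r+1}',\ldots, b_m'\}$ of $V^{\perp}$, of a unique element $v_i' \in V^{\perp}$. By Lemma~\ref{lemma:rightmult} applied with ``$B$'' $= B_2$, the partial product $\mathbf{x}_i \cdot [u_i^{q^\ell}]_{B_2,S}$ is exactly the $S'$-coordinate vector of $u_i^{q^\ell} v_i'$. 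Summing over $i$, the $\ell$-th block of $\mathbf{x}^T M_2$ vanishes iff $\sum_{i=1}^d u_i^{q^\ell} v_i' = 0$, and collecting all $\ell = 1,\ldots,s$ yields exactly equation~(\ref{eq:onemorevg1}); moreover $\mathbf{x} = 0$ iff all $v_i' = 0$. Since the inequality $ms \geq d(m-r)$ from the definition of goodness is automatic once $M_2$ has full row rank, this establishes $1 \Leftrightarrow 2$.

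For $2 \Leftrightarrow 3$, I would exhibit an explicit $\efq$-linear bijection between data $(v_1', \ldots, v_d') \in (V^{\perp})^d$ and data $(w_1, \ldots, w_{m-r}) \in U^{m-r}$ that intertwines the two hypotheses. Fixing bases $\{u_1, \ldots, u_d\}$ of $U$ and $B_2' = \{b_1', \ldots, b_{m-r}'\}$ of $V^{\perp}$, write $v_i' = \sum_j c_{ij} b_j'$ with $(c_{ij}) \in \efq^{d \times (m-r)}$ and set $w_j := \sum_i c_{ij} u_i \in U$. Because $c_{ij} \in \efq$, the Frobenius commutes with the coefficients: $w_j^{q^\ell} = \sum_i c_{ij} u_i^{q^\ell}$, and interchanging summations gives
\[
\sum_{i=1}^d u_i^{q^\ell} v_i' \;=\; \sum_{j=1}^{m-r} w_j^{q^\ell} b_j'
\]
for every $\ell$. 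Both sides are $d(m-r)$-dimensional $\efq$-spaces, so the correspondence $(v_i') \leftrightarrow (w_j)$ is an $\efq$-linear bijection, it sends vanishing to vanishing, and it turns the hypothesis of~2 into that of~3. The ``for every basis'' quantifier in the two conditions is handled by the change-of-basis trick already used in Lemma~\ref{lem:basisU}, which shows that the implication in 2 (resp.\ 3) holding for some basis is equivalent to its holding for every basis.

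The main obstacle is purely one of bookkeeping: several trace-dual bases appear simultaneously ($S'$, $B_1'$, $B_2'$), and the convention in Lemma~\ref{lemma:rightmult} treats the coordinate vector of $w$ as a row vector acting from the left on a matrix whose rows are indexed by $B$ and columns by $S$. Once the dictionary from ``left-kernel vector of $M_2$'' to ``tuple $(v_1', \ldots, v_d') \in (V^{\perp})^d$'' is aligned correctly, each step reduces to linearity together with the $\efq$-linearity of Frobenius, so neither equivalence requires any input beyond Lemma~\ref{lemma:rightmult} and the definitions of $V^{\perp}$ and of the trace-dual basis.
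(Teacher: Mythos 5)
Your proposal is correct and follows essentially the same path as the paper's proof: for $1\Leftrightarrow 2$ you parse a left-kernel vector of $M_2$ into $d$ blocks representing elements of $V^\perp$ in the basis $B_2'$ and invoke Lemma~\ref{lemma:rightmult} with $B=B_2$ exactly as the paper does, and for $2\Leftrightarrow 3$ your coefficient matrix $(c_{ij})$ is the paper's matrix $A$, used in the same way to convert $\sum_i u_i^{q^\ell}v_i'$ into $\sum_j w_j^{q^\ell}b_j'$. Your framing of the latter as an explicit $\efq$-linear bijection between $(V^\perp)^d$ and $U^{m-r}$ that intertwines the two vanishing conditions is a slightly cleaner packaging of the paper's two-directional argument, but the substance is identical.
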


\begin{proof}
Let $\{u_1,u_2,\ldots,u_d\}$ be any basis for $U$. We first prove that conditions $(1)$ and $(2)$ are equivalent.
By definition, the pair $(U,V)$ is good if and only if $M_2$ is of full rank and $ms\geq d(m-r)$. The
latter holds if and only if $\mathbf{x}=\mathbf{0}$ is the only vector in $\efq^{d(m-r)}$ for which $\mathbf{x}M_2=0$.

A vector $\mathbf{x}\in \efq^{d(m-r)}$ can be represented by $d$ chunks of length $m-r$, such that
the $i$th chunk is the vector representation of some element $v'_i\in V^{\perp}$, $1\leq i\leq d$, according to the basis $B_2'$.
By Lemma \ref{lemma:rightmult} we have that $\mathbf{x}M_2=\mathbf{0}$ is equivalent to
$$
\sum_{i=1}^d v'_i u_i^{q^{\ell}}=0,
$$
for all $1\leq \ell\leq s$, which is equivalent to equation~(\ref{eq:onemorevg1}).

Hence, $(U,V)$ is good if and only if for every basis $\{u_1,u_2,\ldots,u_d\}$ of $U$ and for all $v'_1,v'_2,\ldots, v'_d\in V^{\perp}$, equation~(\ref{eq:onemorevg1}) implies
$$
 v'_1=v'_2=\cdots=v'_{d}=0.
$$

Next, we show that conditions $(2)$ and $(3)$ are equivalent.
Let $w_1,w_2,\ldots,w_{m-r}\in U$
and let $A=(a_{i,j})\in \efq^{d\times (m-r)}$ be the
matrix for which
$w_j=\sum_{i=1}^{d}u_ia_{i,j}$, for all $1\leq j\leq m-r$.

Then, for every basis $\{b'_1,b'_2,\ldots, b'_{m-r}\}$ of $V^{\perp}$,
\begin{equation*}
\begin{aligned}
T{}&(w_1,w_2,\ldots,w_{m-r};s)\cdot (b'_1,b'_2,\ldots,b'_{m-r})^T=\\
&T(u_1,u_2,\ldots,u_d;s)\cdot A\cdot(b'_1,b'_2,\ldots,b'_{m-r})^T=\\
&T(u_1,u_2,\ldots,u_d;s)\cdot(v'_1,v'_2,\ldots,v'_d)^T,
\end{aligned}
\end{equation*}
for $v'_1,v'_2,\ldots,v'_{d}\in V$, such that
$v'_i=\sum_{j=1}^{m-r}a_{i,j}b'_j$, $1\leq i\leq d$.

Hence,
$$
T(w_1,w_2,\ldots,w_{m-r};s)\cdot (b'_1,b'_2,\ldots,b'_{m-r})^T=\mathbf{0},
$$
if and only if
$$
T(u_1,u_2,\ldots,u_d;s)\cdot(v'_1,v'_2,\ldots,v'_d)^T=\mathbf{0}.
$$

Therefore, if condition $(2)$ holds and $w_1,w_2,\ldots,w_{m-r}$ satisfy equation~(\ref{eq:sform1}), then $v'_1=v'_2=\cdots= v'_d=0$ and hence
$A$ is the zero matrix. Thus, $w_1=w_2=\cdots=w_{m-r}=0$ and condition $(3)$ holds as well.
Similarly, condition $(3)$ implies condition $(2)$.
\end{proof}


For a positive integer $i$, define $U^{\wedge q^i}\deff\left\{u^{q^i}~:~u\in U\right\}$. Note that,
since the Frobenius map is $\efq$-linear, it follows that $U^{\wedge q^i}$ is an $\efq$-linear subspace as well.
The next proposition, that is useful for deriving explicit code constructions, suggests a duality between a linear repair scheme of $C(U,s)$, in which each surviving node as to transmits $r$ $\efq$-symbols, and a linear repair scheme of $C(V^{\perp},s)$, in which each surviving node as to transmits $m-d$ $\efq$-symbols.
\begin{proposition}\label{prop:duality}
The pair $(U,V)$ is good if and only if $\left(V^{\perp},
\left(U^{\wedge q^{s+1}}\right)^{\perp}\right)$ is good.
\end{proposition}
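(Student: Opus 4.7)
The plan is to use Proposition~\ref{prop:good}, which gives three equivalent characterizations of goodness. I will show that characterization $(2)$ applied to the pair $(V^\perp,(U^{\wedge q^{s+1}})^\perp)$ is equivalent to characterization $(3)$ applied to $(U,V)$; since each characterization describes goodness of the respective pair, this yields the claimed duality.

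Before the main computation I would check that the parameter data is symmetric under the swap. We have $\dim V^\perp = m-r$ and $\dim (U^{\wedge q^{s+1}})^\perp = m-d$ (using that $\sigma^{s+1}$ is an $\efq$-linear bijection, so $\dim U^{\wedge q^{s+1}} = d$); thus the roles $d\leftrightarrow m-r$ get swapped and the inequality $ms \geq d(m-r)$ is invariant. Non-degeneracy of the trace pairing gives $((U^{\wedge q^{s+1}})^\perp)^\perp = U^{\wedge q^{s+1}}$, which will be needed to identify the codomain of the tuples appearing in condition $(2)$ for the dual pair.

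The key step is a Frobenius reindexing. Writing out condition $(2)$ of Proposition~\ref{prop:good} for $(V^\perp,(U^{\wedge q^{s+1}})^\perp)$: pick a basis $\{b'_1,\ldots,b'_{m-r}\}$ of $V^\perp$ and parametrize a generic tuple in $U^{\wedge q^{s+1}}$ as $(w_1^{q^{s+1}},\ldots,w_{m-r}^{q^{s+1}})$ with $w_i\in U$. The equations
$$
\sum_{i=1}^{m-r} (b'_i)^{q^\ell}\, w_i^{q^{s+1}} = 0, \qquad \ell=1,\ldots,s,
$$
are required to force $w_1=\cdots=w_{m-r}=0$. Raising the $\ell$-th equation to the power $q^{m-\ell}$, and using that Frobenius is a field automorphism with $x^{q^m}=x$ in $\efqm$, the exponent on $b'_i$ collapses to $q^0$ and the exponent on $w_i$ becomes $q^{s+1-\ell}$. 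Setting $k=s+1-\ell$, which ranges over $\{1,\ldots,s\}$ as $\ell$ does, the system is equivalent to
$$
\sum_{i=1}^{m-r} w_i^{q^k}\, b'_i = 0, \qquad k=1,\ldots,s,
$$
which is precisely the system in condition $(3)$ of Proposition~\ref{prop:good} for $(U,V)$. Since $w_i^{q^{s+1}}=0$ iff $w_i=0$ and the reindexing is reversible, the two characterizations coincide, and Proposition~\ref{prop:good} finishes the argument.

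The main obstacle I anticipate is purely bookkeeping: correctly tracking what plays the role of $U$, $V$, $d$, and $r$ on each side, and seeing that the particular exponent $q^{s+1}$ is exactly what is needed so that the substitution $\tilde v'_i=w_i^{q^{s+1}}$ followed by $\sigma^{m-\ell}$ converts condition $(2)$ for the dual pair into condition $(3)$ for the original pair. Once these identifications are in place, the duality is essentially a one-line symmetry observation.
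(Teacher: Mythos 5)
Your proof is correct and follows essentially the same route as the paper: both apply Proposition~\ref{prop:good}, match condition $(2)$ for $\bigl(V^\perp,(U^{\wedge q^{s+1}})^\perp\bigr)$ with condition $(3)$ for $(U,V)$, and perform the same Frobenius reindexing (the paper raises to $q^{m+\ell-(s+1)}$ after a relabeling $t=s+1-\ell$, which is the same operation as your single-step raise to $q^{m-\ell}$). Your version is marginally tidier in making the parameter bookkeeping ($d\leftrightarrow m-r$, invariance of $ms\geq d(m-r)$, and $((U^{\wedge q^{s+1}})^\perp)^\perp=U^{\wedge q^{s+1}}$) explicit and in noting the reindexing is reversible rather than deferring the converse to ``similar arguments.''
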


\begin{proof}

We will show that if $(U,V)$  is good then $\left(V^{\perp},
\left(U^{\wedge q^{s+1}}\right)^{\perp}\right)$ is also good. Similar arguments can be used
to prove the other direction.

Let $\{b'_1,\ldots,b'_{m-r}\}$ be a basis for $V^{\perp}$.
Assume that $(U,V)$ is good and that $(w_1^{q^{s+1}},w_2^{q^{s+1}},\ldots, w_{m-r}^{q^{s+1}})\in U^{\wedge q^{s+1}}$
satisfies
$$
T(b'_1,b'_2,\ldots, b'_{m-r};s)\cdot(w_1^{q^{s+1}},w_2^{q^{s+1}},\ldots, w_{m-r}^{q^{s+1}})^T=\mathbf{0}.
$$
Then, for all $1\leq \ell \leq s$, we have
$$
\sum_{j=1}^{m-r}w_j^{q^{s+1}}{b'_j}^{q^{\ell}} = 0.
$$
Since $t=s+1-\ell$ satisfies that $1\leq t\leq s$, we can rewrite
the equations as
$$
\sum_{j=1}^{m-r}w_j^{q^{s+1}}{b'_j}^{q^{t}} = 0.
$$
Rasing the $t$th equation to the power of $q^{m+\ell-(s+1)}$
and using the fact that $x^{q^m}=x$, for all $x\in \efqm$, we get
that for all $1\leq \ell\leq s$,
$$
\sum_{j=1}^{m-r} w_j^{q^{\ell}}b'_j=0,
$$
or equivalently
$$
T(w_1,w_2,\ldots,w_{m-r};s)\cdot (b'_1,b'_2,\ldots,{b'}_{m-r})^T=\mathbf{0}.
$$
Since $(U,V)$ is good, it follows from condition $(3)$ of Proposition~\ref{prop:good}
that $w_1=w_2=\ldots=w_{m-r}=0$, and hence $w_j^{q^{s+1}}=0$, for all $1\leq j\leq m-r$.
Thus, by condition $(2)$ of Proposition~\ref{prop:good} we have that
$\left(V^{\perp},
\left(U^{\wedge q^{s+1}}\right)^{\perp}\right)$ is good.

\end{proof}

Let $\Omega$ be the set of all vectors in $\efqm^{m-r}$ whose
entries are $\efq$-linearly independent, i.e., for $\mathbf{x}\in\efqm^{m-r}$, $\mathbf{x}\in \Omega$ if and only if $\rank_q(\mathbf{x})=m-r$.
\begin{lemma}\label{lem:omega_size}
$$
|\Omega|>\frac{q^{m(m-r)}(q-1-q^{-r})}{q-1}.
$$
\end{lemma}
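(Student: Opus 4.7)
The plan is to compute $|\Omega|$ exactly via a standard counting argument and then bound the resulting product from below using an elementary inequality. An $(m-r)$-tuple $(x_1,\ldots,x_{m-r}) \in \efqm^{m-r}$ belongs to $\Omega$ exactly when it is an ordered $\efq$-linearly independent tuple in the $m$-dimensional $\efq$-space $\efqm$. Counting these in the usual way (choose $x_1$ to be any nonzero element, then $x_2$ outside $\linspan_q(x_1)$, and so on) gives
\begin{equation*}
|\Omega| \;=\; \prod_{i=0}^{m-r-1}\bigl(q^m - q^i\bigr) \;=\; q^{m(m-r)}\prod_{i=0}^{m-r-1}\bigl(1 - q^{i-m}\bigr).
\end{equation*}

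Next, I would apply the elementary inequality $\prod_{i}(1-a_i) \geq 1 - \sum_i a_i$, valid for $a_i \in [0,1]$ (easy induction on the number of factors). With $a_i = q^{i-m}$, the geometric sum evaluates to
\begin{equation*}
\sum_{i=0}^{m-r-1} q^{i-m} \;=\; q^{-m}\cdot\frac{q^{m-r}-1}{q-1} \;=\; \frac{q^{-r}-q^{-m}}{q-1}.
\end{equation*}
Combining these gives
\begin{equation*}
|\Omega| \;\geq\; q^{m(m-r)}\left(1 - \frac{q^{-r}-q^{-m}}{q-1}\right) \;=\; \frac{q^{m(m-r)}\bigl(q-1-q^{-r}+q^{-m}\bigr)}{q-1}.
\end{equation*}

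Since $q^{-m} > 0$, the right-hand side is strictly larger than $\dfrac{q^{m(m-r)}(q-1-q^{-r})}{q-1}$, establishing the claimed strict inequality. There is no real obstacle here: the only mild subtlety is noting that, although the inequality $\prod(1-a_i)\geq 1-\sum a_i$ can be an equality when $m-r=1$, the extra positive term $q^{-m}/(q-1)$ that appears after the geometric-sum computation already suffices to make the bound strict in all cases.
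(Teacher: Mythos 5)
Your proof is correct and follows essentially the same route as the paper: compute $|\Omega|=\prod_{j=0}^{m-r-1}(q^m-q^j)$, factor out $q^{m(m-r)}$, and apply the inequality $\prod_i(1-a_i)\geq 1-\sum_i a_i$. The only cosmetic difference is how strictness is obtained at the end — you evaluate the finite geometric sum exactly and note the leftover $q^{-m}/(q-1)>0$, whereas the paper bounds the finite sum strictly above by the infinite geometric series $\sum_{j\geq r+1}q^{-j}=q^{-r}/(q-1)$; the two observations are equivalent.
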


\begin{proof}
The size of $\Omega$ is given by
\begin{equation*}
\begin{aligned}
 |\Omega| &= \prod_{j=0}^{m-r-1}(q^m-q^j) \\
   & = q^{m(m-r)}\prod_{j=0}^{m-r-1}(1-q^{-m+j}).
\end{aligned}
\end{equation*}

A straightforward induction on $n$ shows that for all $n$ positive
real numbers $x_1,\ldots,x_n$, we have that $\prod_{j=1}^n (1-x_j)\geq 1-\sum_{j=1}^n x_j$. Hence,
\begin{eqnarray*}
\frac{|\Omega|}{q^{m(m-r)}} & = & \prod_{j=0}^{m-r-1}(1-q^{-(m-j)}) \\
 & \geq & 1-\sum_{j=0}^{m-r-1}q^{-(m-j)}\\
  & = & 1- \sum_{j=r+1}^{m} q^{-j}\\
 & > & 1- \sum_{j=r+1}^{\infty} q^{-j}\\
 & = & 1-\frac{q^{-(r+1)}}{1-q^{-1}}=1-\frac{q^{-r}}{q-1},
\end{eqnarray*}
as required.
\end{proof}

Let
$$
\bad(U)\deff\left\{\mathbf{x}\in \Omega ~:~ \begin{array}{c}\exists
(u_1,\ldots,u_{m-r})\in U^{m-r} \smallsetminus\{\mathbf{0}\},\\
\hbox{s.t. } T(u_1,\ldots,u_{m-r};s)
\cdot \mathbf{x}^{T}=\mathbf{0}^{T}\end{array}\right\}.
$$
For a pair $(U,V)$ of $\efq$-linear subspaces of dimensions $d$ and $r$, respectively,
let $\mathbf{v'}\in \Omega$ be such that $V^{\perp}=\linspan_{q}(\mathbf{v}')$.
It follows from Proposition \ref{prop:good} that $(U,V)$ is good if and only if $\mathbf{v}'\in\Omega \setminus \bad(U)$. In the next section, we will show the existence of a good pair $(U,V)$ under certain conditions. For this purpose, it will be useful to upper bound the size of $\bad(U)$.

\begin{lemma}\label{lem:Badu_size_1}
For $\mathbf{u}=(u_1,\ldots,u_{m-r})\in U^{m-r}\smallsetminus\{\mathbf{0}\}$,
let $\rho=\rank_q(\mathbf{u})$ and
define the set
$$
\bad(\mathbf{u})\deff\left\{\mathbf{x}\in \Omega
~:~T(u_1,\ldots,u_{m-r};s) \cdot \mathbf{x}^{T}=\mathbf{0}^{T}\right\}.
$$
Then the following holds.
\begin{itemize}
\item[$1)$] If $\rho\leq s$ then $\bad(\mathbf{u})=\emptyset$.
\item[$2)$] If $\rho\geq s+1$ then
\begin{equation}
|\bad(\mathbf{u})|<q^{m(m-r-s)}.
\end{equation}
\end{itemize}
\end{lemma}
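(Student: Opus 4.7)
The plan is to reduce both parts to the rank computation of Proposition~\ref{prop:rank} applied to $T := T(u_1,\ldots,u_{m-r};s)$, and then to handle the extra constraint $\mathbf{x}\in\Omega$ combinatorially.

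For part $(1)$, with $\rho\le s$, I would pick an $\efq$-basis $\{y_1,\ldots,y_\rho\}$ of $\linspan_q(u_1,\ldots,u_{m-r})$ and write $u_j=\sum_{k=1}^{\rho}c_{j,k}y_k$ with $c_{j,k}\in\efq$. Because the $c_{j,k}$ lie in $\efq$ and the Frobenius map is $\efq$-linear, each entry of $T$ satisfies $u_j^{q^i}=\sum_k c_{j,k}\,y_k^{q^i}$, so $T\mathbf{x}^T=\mathbf{0}^T$ rewrites as $T(y_1,\ldots,y_\rho;s)\,\tilde{\mathbf{x}}^T=\mathbf{0}^T$ with $\tilde{x}_k:=\sum_j c_{j,k}x_j$. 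By Proposition~\ref{prop:rank}, $T(y_1,\ldots,y_\rho;s)$ has rank $\rho$ over $\efqm$, so its right kernel over $\efqm$ is trivial and $\tilde{x}_k=0$ for every $k$. Since $\mathbf{x}\in\Omega$ has $\efq$-linearly independent entries, the identity $\sum_j c_{j,k}x_j=0$ forces $c_{j,k}=0$ for all $j,k$; this contradicts $\mathbf{u}\neq\mathbf{0}$, so $\bad(\mathbf{u})=\emptyset$.

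For part $(2)$, with $\rho\ge s+1$, Proposition~\ref{prop:rank} gives $\rank(T)=s$ over $\efqm$. Hence the right kernel of $T$, viewed as an $\efqm$-linear map $\efqm^{m-r}\to\efqm^s$, is an $\efqm$-subspace of $\efqm^{m-r}$ of dimension $m-r-s$ and cardinality $q^{m(m-r-s)}$. Since $\bad(\mathbf{u})$ is the intersection of this kernel with $\Omega$, and the zero vector lies in the kernel but not in $\Omega$, we obtain $|\bad(\mathbf{u})|\le q^{m(m-r-s)}-1<q^{m(m-r-s)}$.

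Neither step looks delicate: part $(1)$ is a direct reduction exploiting that $\efq$-scalars pass through Frobenius, and part $(2)$ is immediate from the rank bound together with the observation that $\mathbf{0}\notin\Omega$. The only subtlety worth flagging is that the rank in Proposition~\ref{prop:rank} is taken over $\efqm$ rather than $\efq$, so the dimensions of kernels and images must be counted over $\efqm$ in both parts.
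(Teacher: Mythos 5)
Your proof is correct and follows essentially the same route as the paper's: part (1) factors $T(\mathbf{u};s)=T(y_1,\ldots,y_\rho;s)\cdot N$ with $N$ over $\efq$ (your $\tilde{\mathbf{x}}^T = N\mathbf{x}^T$), applies Proposition~\ref{prop:rank} to conclude trivial right kernel, and invokes $\efq$-independence of the entries of $\mathbf{x}\in\Omega$; part (2) is the same kernel-dimension count via Proposition~\ref{prop:rank}. Your explicit remark that $\mathbf{0}\in\ker T\smallsetminus\Omega$ justifies the strict inequality slightly more carefully than the paper does.
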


\begin{proof}
Let $\{w_1,\ldots,w_{\rho}\}$ be a basis for
$\linspan_{q}(\mathbf{u})$. Since the Frobenius map is $\efq$-linear, it follows that there exists a (unique) matrix $N\in
\efq^{\rho\times(m-r)}$ such that
$$
T(u_1,\ldots,u_{m-r};s) = T(w_1,\ldots,w_{\rho};s)\cdot N.
$$

To prove $(1)$, assume that $\rho\leq s$. It follows from
Proposition~\ref{prop:rank} that the rank of $T=T(u_1,\ldots,u_{\rho};s)$ is $\rho$, and
therefore the columns of $T$ are $\efqm$-linearly independent. Hence,
$T\cdot N\cdot\mathbf{x}^{T}=\mathbf{0}^{T}$ if and only if $N\cdot\mathbf{x}^{T}=\mathbf{0}^{T}$.
However, for all $\mathbf{x}\in \Omega$, we have that $N\cdot
\mathbf{x}^{T}\neq \mathbf{0}$. This is true since all entries of $N$ belong to
$\efq$, $N$ is not the zero matrix, and the entries of
$\mathbf{x}$ are $\efq$-linearly independent. Hence, $T(u_1,\ldots,u_{m-r};s) \cdot \mathbf{x}^{T}\neq \mathbf{0}^{T}$, for
all $\mathbf{x}\in \Omega$, which implies that $\bad(\mathbf{u})=\emptyset$.

For the proof of $(2)$, assume that $\rho\geq s+1$ (note that since $\rho\leq m-r$, this
implies in particular that $s\leq m-r-1$). It follows from Proposition
\ref{prop:rank}, that $\rank_{q^m}(T(u_1,\ldots,u_{m-r};s))=s$. Hence,
the $\efqm$-dimension of the right-kernel of $T(u_1,\ldots,u_{m-r};s)$
is $m-r-s$. Thus,
\begin{equation}
|\bad(\mathbf{u})|<q^{m(m-r-s)}.
\end{equation}

\end{proof}

\begin{lemma}\label{lem:bad_size}
Let $a$ be a common factor of $m$
and $d$ and assume that $U\subseteq \efqm$ is an $\ef_{q^a}$-linear subspace of
dimension $d/a$. Then,

\begin{equation}\label{eq:numbad}
|\bad(U)|
< \frac{q^{d(m-r)-ms}}{q^a-1} q^{m(m-r)}
\end{equation}
\end{lemma}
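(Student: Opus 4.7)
My plan is to apply a union bound over nonzero $\mathbf{u}\in U^{m-r}$ to Lemma~\ref{lem:Badu_size_1}, and to sharpen it by a factor of $q^a-1$ by recognizing that many $\mathbf{u}$'s yield exactly the same set $\bad(\mathbf{u})$ and can therefore be collapsed to a single representative. This grouping is precisely what makes the $\ef_{q^a}$-structure of $U$ pay off in the bound; without it, one would only recover the naive estimate with $q-1$ in the denominator.

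First I would note that $\ef_{q^a}^*$ acts on $U^{m-r}\smallsetminus\{\mathbf{0}\}$ by scalar multiplication. The action is well defined because $U$ is $\ef_{q^a}$-linear, and it is free because $\ef_{q^a}\subseteq\efqm$ is a field: if $c\mathbf{u}=\mathbf{u}$ and some coordinate $u_i$ is nonzero, then $c=1$. Thus the orbits partition $U^{m-r}\smallsetminus\{\mathbf{0}\}$ into exactly $(q^{d(m-r)}-1)/(q^a-1)$ classes, each of size $q^a-1$. Next I would verify that $\bad(\mathbf{u})$ is constant on each orbit: for any $c\in\efqm^*$, the $i$th row of $T(cu_1,\ldots,cu_{m-r};s)$ equals $c^{q^i}$ times the $i$th row of $T(u_1,\ldots,u_{m-r};s)$, so scaling rows by nonzero elements preserves the right kernel, and $\bad(c\mathbf{u})=\bad(\mathbf{u})$.

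With these two observations in hand, the rest is a short union-bound computation. Writing $\bad(U)=\bigcup_{\mathbf{u}}\bad(\mathbf{u})$ as a union over a single representative per orbit, the orbits whose representative has rank at most $s$ contribute nothing by part~$(1)$ of Lemma~\ref{lem:Badu_size_1}, while each of the remaining orbits contributes strictly less than $q^{m(m-r-s)}$ by part~$(2)$. This gives
\[
|\bad(U)| \;<\; \frac{q^{d(m-r)}-1}{q^a-1}\cdot q^{m(m-r-s)} \;<\; \frac{q^{d(m-r)-ms}}{q^a-1}\,q^{m(m-r)},
\]
which is the claimed inequality (and the trivial case in which every orbit representative has rank $\leq s$ gives $|\bad(U)|=0$, for which the bound holds vacuously). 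The only nontrivial step is the orbit-invariance of $\bad(\cdot)$ combined with the freeness of the $\ef_{q^a}^*$-action; everything else is bookkeeping.
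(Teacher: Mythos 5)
Your proof is correct and follows essentially the same route as the paper: partition $U^{m-r}\smallsetminus\{\mathbf{0}\}$ into $\ef_{q^a}^*$-orbits, observe that $\bad(\cdot)$ is constant on each orbit because scaling multiplies the rows of $T(\cdot;s)$ by nonzero factors, and then apply the union bound together with Lemma~\ref{lem:Badu_size_1}. The only cosmetic difference is that you spell out the freeness of the action and the vacuous rank-$\leq s$ case explicitly, whereas the paper leaves these implicit.
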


\begin{proof}
Define an
equivalence relation $\sim$ on
$\efqm^{m-r}\smallsetminus\{\mathbf{0}\}$ by setting $\mathbf{x}\sim\mathbf{y}$
if and only if there exists $\beta\in \ef_{q^a}\smallsetminus\{0\}$ such that
$\mathbf{x}=\beta\cdot \mathbf{y}$. Note that, since $U$ is a vector space over
$\ef_{q^a}$, the equivalence class of any $\mathbf{u}\in
U^{m-r}\smallsetminus \{\mathbf{0}\}$ is contained in $U^{m-r}\smallsetminus
\{\mathbf{0}\}$.

Let $\reps\subset U^{m-r}\smallsetminus\{\mathbf{0}\}$ be a set consisting
of a single representative from each
equivalence class of $\sim$ in $U^{m-r}\smallsetminus\{\mathbf{0}\}$, and
note that $|\reps|=(q^{d(m-r)}-1)/(q^a-1)$. Note also that, as
$T(\beta\cdot\mathbf{u};s)=\diag\big(\{\beta^{q^i}\}_{i=1}^s\big)\cdot
T(\mathbf{u};s)$ ($\diag(\beta_1,\ldots,\beta_n)$ is the diagonal $n\times n$ matrix $D$ with $D_{i,i}=\beta_i$, $1\leq i\leq n$), for all $\beta\in \efqm$, $\mathbf{u}\sim\mathbf{u}'$ implies
$\bad(\mathbf{u})=\bad(\mathbf{u}')$. It
follows from the above
comments and from Lemma~\ref{lem:Badu_size_1} that
\begin{equation*}
\begin{split}
|\bad(U)|&=\Big|\bigcup_{\mathbf{u}\in \reps} \bad(\mathbf{u}) \Big|
  \leq  \sum_{\mathbf{u}\in \reps} |\bad(\mathbf{u})| \\
&< \frac{q^{d(m-r)}-1}{q^a-1} q^{m(m-r-s)}
< \frac{q^{d(m-r)-ms}}{q^a-1} q^{m(m-r)}
\end{split}
\end{equation*}

\end{proof}

An intriguing question is wether or not, for all $U$, all pairs $(U,V)$
are good, or equivalently, does $|\bad(U)|=0$. The answer to this question is given in the next proposition.
The proof can be found in the appendix.

\begin{proposition}\label{prop:bad_example}
Assume that $1\leq s<d<m$ and $ms\geq d(m-r)$.
Then for every $\efq$-linear subspace $U\subseteq \efqm$ of dimension $d$, the following holds.
\begin{itemize}
\item[1)] If $r\geq m-s$ then $|\bad(U)|=0$.
\item[2)] If $r<m-s$ then $|\bad(U)|>0$.
\end{itemize}
\end{proposition}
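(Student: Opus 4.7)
My plan for part (1) is to invoke Lemma~\ref{lem:Badu_size_1}(1) directly: since $r\geq m-s$, the length $m-r$ of any $\mathbf{u}\in U^{m-r}\setminus\{\mathbf{0}\}$ is at most $s$, so $\rho=\rank_q(\mathbf{u})\leq s$ and the lemma forces $\bad(\mathbf{u})=\emptyset$. Taking the union over $\mathbf{u}$ yields $\bad(U)=\emptyset$.

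For part (2) I would exhibit an explicit element of $\bad(U)$. Note first that $r<m-s$ gives $m-r\geq s+1$ and $s<d$ gives $d\geq s+1$, so I can pick $u_1,\ldots,u_{s+1}\in U$ that are $\efq$-linearly independent. Consider the matrix $T:=T(u_1,\ldots,u_{s+1};s)\in\efqm^{s\times(s+1)}$. By Proposition~\ref{prop:rank}, $T$ has $\efqm$-rank $s$, so its right-kernel is $1$-dimensional over $\efqm$; let $(\alpha_1,\ldots,\alpha_{s+1})$ be a nonzero generator. Padding with zeros, set $\mathbf{u}=(u_1,\ldots,u_{s+1},0,\ldots,0)\in U^{m-r}\setminus\{\mathbf{0}\}$; then any $\mathbf{x}\in\efqm^{m-r}$ whose first $s+1$ entries equal $(\alpha_1,\ldots,\alpha_{s+1})$ automatically satisfies $T(\mathbf{u};s)\mathbf{x}^T=\mathbf{0}$.

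The main obstacle is showing that $\alpha_1,\ldots,\alpha_{s+1}$ are themselves $\efq$-linearly independent, because once this is established I can complete them to a length-$(m-r)$ tuple with $\efq$-linearly independent entries (using $m-r\leq m$), giving $\mathbf{x}\in\Omega$ and hence $\mathbf{x}\in\bad(U)$. I would argue by contradiction: a nontrivial $\efq$-linear relation would let me write, after relabeling, $\alpha_{s+1}=\sum_{j=1}^{s}c_j\alpha_j$ with $c_j\in\efq$. Substituting into the kernel equations $\sum_{j=1}^{s+1}u_j^{q^\ell}\alpha_j=0$ and using that the Frobenius fixes $\efq$ (so $c_ju_{s+1}^{q^\ell}=(c_ju_{s+1})^{q^\ell}$), the system collapses to $T(v_1,\ldots,v_s;s)(\alpha_1,\ldots,\alpha_s)^T=\mathbf{0}$, where $v_j=u_j+c_ju_{s+1}$. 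These $v_1,\ldots,v_s$ differ from $u_1,\ldots,u_s$ by scalar multiples of $u_{s+1}$, so together with $u_{s+1}$ they still form an $\efq$-basis of $\linspan_q(u_1,\ldots,u_{s+1})$; in particular they are $\efq$-linearly independent. Proposition~\ref{prop:rank} applied to this square matrix gives $\efqm$-rank $s$, hence only the trivial solution, forcing $\alpha_1=\cdots=\alpha_s=0$ and then $\alpha_{s+1}=0$, contradicting nonzero-ness of the kernel generator. Completing $(\alpha_1,\ldots,\alpha_{s+1})$ to the required $\mathbf{x}\in\Omega$ then produces a certificate $(\mathbf{u},\mathbf{x})$ showing $|\bad(U)|>0$.
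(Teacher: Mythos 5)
Your argument is correct for both parts, and in each part it diverges from the paper's route in a way worth noting.

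For part (1), you observe that $r\ge m-s$ forces every $\mathbf{u}\in U^{m-r}\smallsetminus\{\mathbf{0}\}$ to have $\rank_q(\mathbf{u})\le m-r\le s$, so Lemma~\ref{lem:Badu_size_1}(1) gives $\bad(\mathbf{u})=\emptyset$ outright and the union over $\mathbf{u}$ is empty. This is a genuinely different (and more economical) proof than the paper's: the paper instead invokes the image polynomial $\fim_V$ of an arbitrary $(m-s)$-dimensional $V$, argues that $b_0\neq 0$, and deduces that for every $a_0$ there is a degree-$q^s$ linearized polynomial with leading term $a_0X$ mapping $U$ into $V$, which establishes the column-space containment of $M_1$ in $M_2$. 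Your version sidesteps the image-polynomial machinery entirely and also proves the stated conclusion $|\bad(U)|=0$ directly, whereas the paper's appendix argument, as written, only addresses the ``weaker sense'' column-space statement.

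For part (2), your construction is essentially the paper's but streamlined to the weaker goal. The paper takes $\tau=\min\{d,m-r\}$ independent elements $w_1,\dots,w_\tau\in U$, forms $T(w_1,\dots,w_\tau;\tau-1)$, extracts a nonzero kernel vector, proves its entries are $\efq$-independent by the same change-of-variables contradiction you use, and then additionally verifies $\sum w_i b_i'\neq 0$ (via the inverse Frobenius trick), which it needs because it is proving the stronger claim that the column space of $M_1$ is not contained in that of $M_2$. Since you only need a certificate for $|\bad(U)|>0$, you take just $s+1$ independent elements, work with the $s\times(s+1)$ matrix $T(u_1,\dots,u_{s+1};s)$, and skip the final $\sum u_i\alpha_i\neq 0$ check. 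All steps are sound: the reduction $v_j=u_j+c_ju_{s+1}$ does yield an $\efq$-independent $s$-tuple (the change-of-basis matrix is unitriangular), Proposition~\ref{prop:rank} then makes $T(v_1,\dots,v_s;s)$ nonsingular, the padding of $\mathbf{u}$ by zeros kills the remaining columns of $T(\mathbf{u};s)$, and $s+1\le m-r\le m$ lets you extend $(\alpha_1,\dots,\alpha_{s+1})$ to an element of $\Omega$. The trade-off is that your proof yields only the proposition as stated, while the paper's longer proof also gives the stronger conclusion about column spaces that it advertises in the appendix preamble.
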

\section{Existence of Linear Repair Schemes }\label{sec:mainres}
In this section we present and prove the main result of the paper, namely, the existence of a linear repair scheme for $C(U,s)$, in which surviving nodes transmit at most $r$ symbols from $\efq$.
\begin{theorem}\label{thm:main}
The code $C(U,s)$ has an $\efq$-linear repair scheme
in which each surviving node transmits
$r$ $\efq$-symbols, provided that one of
the following conditions holds.

\begin{enumerate}

\item $q\geq 3$ and $ms\geq d(m-r)$.
\item $q=2$, $r\geq 2$, and $ms \geq d(m-r) + 1$.
\item $q=2$, $ms = d(m-r)$ and $U$ is a
$\ef_{q^a}$-linear subspace of $\efqm$ of dimension $d/a$, for $a=\gcd(m,d)$.

\end{enumerate}
\end{theorem}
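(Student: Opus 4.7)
The plan is to combine Proposition~\ref{prop:linbymat} with the counting estimates of Lemmas~\ref{lem:omega_size} and~\ref{lem:bad_size} via a probabilistic choice of the subspace $V$.

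By Proposition~\ref{prop:linbymat}, it suffices in each of the three cases to exhibit an $\efq$-subspace $V\subseteq\efqm$ of dimension $r$ such that the pair $(U,V)$ is good. Specifying $V$ is equivalent to specifying a basis $\mathbf{v}'\in\Omega$ of $V^{\perp}$, and the discussion following Lemma~\ref{lem:omega_size} records that, via condition~$(3)$ of Proposition~\ref{prop:good}, the pair $(U,V)$ is good iff $\mathbf{v}'\in\Omega\setminus\bad(U)$. Hence it is enough to verify the inequality $|\bad(U)|<|\Omega|$; if $\mathbf{v}'$ is chosen uniformly from $\Omega$, the sharper form of the same inequality also yields the success probability of at least $1/3$ advertised in the abstract.

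From Lemma~\ref{lem:omega_size} we always have $|\Omega|>q^{m(m-r)}(q-1-q^{-r})/(q-1)$. For case~$(1)$, where $q\geq 3$ and $ms\geq d(m-r)$, I apply Lemma~\ref{lem:bad_size} with $a=1$, obtaining $|\bad(U)|<q^{m(m-r)}/(q-1)$, so the required inequality $|\bad(U)|<|\Omega|$ reduces to $q-2-q^{-r}>0$, which is immediate for every $q\geq 3$ and $r\geq 1$. For case~$(2)$ I still take $a=1$; the strengthened hypothesis $ms\geq d(m-r)+1$ produces an extra factor of $q^{-1}=1/2$ in the Lemma~\ref{lem:bad_size} bound, so $|\bad(U)|<q^{m(m-r)}/2$, while $r\geq 2$ yields $|\Omega|>(3/4)q^{m(m-r)}$. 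For case~$(3)$ I instead exploit the $\ef_{q^a}$-structure of $U$ with $a=\gcd(m,d)$: as the conclusion is vacuous for $a=1$, the meaningful situation is $a\geq 2$, in which Lemma~\ref{lem:bad_size} gives $|\bad(U)|<q^{m(m-r)}/(2^a-1)\leq q^{m(m-r)}/3$, whereas $|\Omega|>q^{m(m-r)}(1-2^{-r})\geq q^{m(m-r)}/2$ for every $r\geq 1$.

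The main difficulty is the characteristic-two barrier: the naive estimate $|\bad(U)|\leq q^{m(m-r)+d(m-r)-ms}/(q-1)$ already matches the lower bound on $|\Omega|$ exactly when $q=2$ and $ms=d(m-r)$, so the probabilistic argument barely works over $\ef_2$. The case split into~$(2)$ and~$(3)$ is precisely the way to win back the missing multiplicative slack: either by strengthening $ms\geq d(m-r)$ to $ms\geq d(m-r)+1$ (gaining a factor $1/q$ in Lemma~\ref{lem:bad_size}), or by imposing an $\ef_{q^a}$-linear structure on $U$ (replacing the denominator $q-1$ in Lemma~\ref{lem:bad_size} by $q^a-1$). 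All remaining work is the verification of the elementary inequalities above.
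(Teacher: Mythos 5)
Your proof is correct and follows essentially the same route as the paper: the paper packages the comparison $|\bad(U)|<|\Omega|$ as a probability bound (Theorem~\ref{thm:main2}) and then specializes to the three cases (Corollary~\ref{cor:main}), whereas you carry out the cardinality comparison directly; the underlying ingredients (Proposition~\ref{prop:linbymat}, Proposition~\ref{prop:good}, Lemma~\ref{lem:omega_size}, Lemma~\ref{lem:bad_size}) and the case analysis are identical.
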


Notice that, the third condition of Theorem~\ref{thm:main} includes a more strict restriction on $U$, i.e, $U$ is required to be an $\ef_{q^a}$-subspace of $\efqm$ of dimension $d/a$. This requirement on $U$ is stronger, since any such subspace of $\efqm$ is also an $\ef_q$-subspace of dimension $d$. In addition, if $d$ and $m$ are co-prime, i.e., $a=1$, the equality $s=d(m-r)/m$ implies that $r=m$ and $s=0$, and hence $C(U,s)$ is an RS code of length $n=q^d$ and dimensions $k=n-1$. This special case trivially holds, since such a code can correct any node failure when all surviving nodes transmit their entire content.

The proof of Theorem~\ref{thm:main} involves a probabilistic
argument in which an $\efq$-subspace $V\subset \efqm$ of dimension $r$ is chosen uniformly at random. If the pair $(U,V)$ is good then by Proposition~\ref{prop:linbymat} a linear repair scheme for the code $C(U,s)$ is guaranteed. Moreover, the goodness of the pair $(U,V)$ can be verified, using Gaussian elimination, in polynomial time. We will show that
the probability that $(U,V)$ is good is fairly large (at least $1/3$) and thus obtain a practical probabilistic algorithm to
construct the promised repair scheme for each subspace $U$
guaranteed by Theorem~\ref{thm:main}.


%
%
%
%
%
%

In what follows, we assume that $\mathbf{v}'=(v_1',\ldots,v_{m-r}')$ is a vector drawn uniformly at random from the set $\Omega$, i.e., $\mathbf{v}'\in \efqm^{m-r}$ and $\rank_q(\mathbf{v}')=m-r$.
The proof of Theorem~\ref{thm:main} will follow immediately from the next theorem and corollary.
\begin{theorem}\label{thm:main2}
Let $\mathbf{v'}\in \Omega$ and let $V\subset \efqm$ be the $\efq$-linear subspace of dimension $r$ such that
$V^{\perp}=\linspan_{q}(\mathbf{v}')$.
For a positive integer $a$, if $a$ is a common factor of $m$
and $d$, and $U\subseteq \efqm$ is an $\ef_{q^a}$-subspace of
dimension $d/a$, the probability that $(U,V)$ is good is at least
\begin{equation}\label{eq:prob}
1-\frac{q^{d(m-r)-ms}}{q^a-1}\cdot\frac{q-1}{q-1-q^{-r}}.
\end{equation}
\end{theorem}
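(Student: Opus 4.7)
The plan is to translate the event ``$(U,V)$ is good'' into a condition on the random vector $\mathbf{v}'$ and then apply the two counting lemmas already in hand. By the very definition of $V^\perp = \linspan_q(\mathbf{v}')$ and by Proposition~\ref{prop:good} (together with the observation preceding Lemma~\ref{lem:Badu_size_1}), one has $(U,V)$ good if and only if $\mathbf{v}' \in \Omega \setminus \bad(U)$. Since $\mathbf{v}'$ is drawn uniformly from $\Omega$ and $\bad(U) \subseteq \Omega$ by construction, this immediately gives
\begin{equation*}
\prob\bigl[(U,V) \text{ good}\bigr] \;=\; 1 - \frac{|\bad(U)|}{|\Omega|}.
\end{equation*}

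The next step is to plug in the two bounds. The hypothesis that $a$ is a common factor of $m$ and $d$, and that $U$ is an $\ef_{q^a}$-linear subspace of dimension $d/a$, is exactly what is required to invoke Lemma~\ref{lem:bad_size}, yielding $|\bad(U)| < \frac{q^{d(m-r)-ms}}{q^a-1}\, q^{m(m-r)}$. For the denominator, Lemma~\ref{lem:omega_size} gives $|\Omega| > \frac{q^{m(m-r)}(q-1-q^{-r})}{q-1}$.

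Substituting these two estimates, the factor $q^{m(m-r)}$ cancels, and one obtains
\begin{equation*}
\frac{|\bad(U)|}{|\Omega|} \;<\; \frac{q^{d(m-r)-ms}}{q^a-1}\cdot\frac{q-1}{q-1-q^{-r}},
\end{equation*}
which is exactly the complement of the quantity~(\ref{eq:prob}). This completes the proof.

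There is essentially no obstacle here: the substantive content has already been absorbed into Lemmas~\ref{lem:omega_size} and~\ref{lem:bad_size} and Proposition~\ref{prop:good}. The only thing to verify carefully is that the equivalence ``$(U,V)$ good $\Leftrightarrow$ $\mathbf{v}' \notin \bad(U)$'' really does follow from condition~$(3)$ of Proposition~\ref{prop:good} applied to the basis of $V^\perp$ given by $\mathbf{v}'$, and that Lemma~\ref{lem:bad_size} applies with the given $a$. Both points are immediate from the setup, so the argument reduces to the direct division of the two bounds.
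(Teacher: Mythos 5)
Your proof is correct and takes essentially the same approach as the paper: translate goodness of $(U,V)$ into the event $\mathbf{v}'\notin\bad(U)$ via Proposition~\ref{prop:good}, compute the probability as $1-|\bad(U)|/|\Omega|$, and substitute the bounds from Lemmas~\ref{lem:omega_size} and~\ref{lem:bad_size}.
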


\begin{proof}

We will use a counting argument based on Proposition~\ref{prop:good}.
By Proposition \ref{prop:good}, $(U,V)$ is good if and only if $\mathbf{v}'\in\Omega \setminus \bad(U)$.
Hence,
\begin{equation}\label{eq:rawprob}
\prob\big((U,V)\text{ is good};U\big) = 1-\frac{|\bad(U)|}{|\Omega|}.
\end{equation}

Combining~(\ref{eq:rawprob}) with Lemmas~\ref{lem:omega_size} and~\ref{lem:bad_size} we have that
$$
\prob\big((U,V)\text{ is good}\big)> 1- \frac{q^{d(m-r)-ms}}{q^a-1}\cdot\frac{q-1}{q-1-q^{-r}}.
$$

\end{proof}

\begin{corollary}\label{cor:main}
If $U\subseteq\efqm$ is an $\efq$-subspace
of dimension $d$ and $p$ is the probability that
$(U,V)$ is good, then the following statements hold.
\begin{enumerate}
\item If $q\geq 3$ and $ms\geq d(m-r)$ then $p\geq 2/5$.
\item If $q=2$, $r\geq 2$, and $ms\geq d(m-r)+1$ then $p\geq 1/3$.
\item Let $a=\gcd(m,d)$. If $q=2$, $a\geq 2$, $ms = d(m-r)$, and $U$ is also an $\ef_{q^a}$-subspace of $\efqm$
of dimension
$d/a$, then $p\geq 1/3$.
\end{enumerate}
\end{corollary}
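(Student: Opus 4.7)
The plan is to apply Theorem~\ref{thm:main2} directly, observing that each of the three claims reduces to a simple numerical inequality on the bound
\[
p \geq 1 - \frac{q^{d(m-r)-ms}}{q^a-1}\cdot\frac{q-1}{q-1-q^{-r}}.
\]
Before splitting into cases, I would note that in every case $r\geq 1$: the hypothesis $ms\geq d(m-r)$ combined with $s<d$ and $d\leq m$ (forced by $U\subseteq \efqm$) gives $r\geq m(d-s)/d\geq 1$. This is what makes the denominator $q-1-q^{-r}$ strictly positive in all regimes we consider, so the bound from Theorem~\ref{thm:main2} is meaningful.

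For case (1), I take $a=1$, so the ratio above becomes $q^{d(m-r)-ms}/(q-1-q^{-r})$. The hypothesis $ms\geq d(m-r)$ gives $q^{d(m-r)-ms}\leq 1$, so it suffices to prove $q-1-q^{-r}\geq 5/3$ whenever $q\geq 3$ and $r\geq 1$. Since $q-1-q^{-r}$ is monotone in both $q$ and $r$, its minimum over this range is attained at $q=3$, $r=1$, giving exactly $3-1-\tfrac{1}{3}=\tfrac{5}{3}$, so $p\geq 1-\tfrac{3}{5}=\tfrac{2}{5}$.

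For case (2) I again take $a=1$. The strict inequality $ms\geq d(m-r)+1$ gives $q^{d(m-r)-ms}\leq 1/2$. For $q=2$ and $r\geq 2$, $q-1-q^{-r}=1-2^{-r}\geq 3/4$. Substituting, the negative term is at most $(1/2)/(3/4)=2/3$, so $p\geq 1/3$. For case (3) the hypothesis $ms=d(m-r)$ makes $q^{d(m-r)-ms}=1$. With $a\geq 2$ we have $q^a-1\geq 3$, and since $r\geq 1$ we have $1-2^{-r}\geq 1/2$. Therefore the negative term is bounded by $\tfrac{1}{3}\cdot\tfrac{1}{1/2}=\tfrac{2}{3}$, again yielding $p\geq 1/3$.

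The proof is essentially bookkeeping; there is no genuine obstacle since Theorem~\ref{thm:main2} already does the real work. The only subtlety is verifying that $r\geq 1$ holds automatically from the parameter constraints (and that $r\geq 2$ is a stated hypothesis of case (2)), which is what ensures the denominator $q-1-q^{-r}$ is bounded below by a constant in each regime.
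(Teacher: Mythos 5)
Your proof is correct and takes essentially the same route as the paper's: both arguments consist of plugging the hypotheses into the bound from Theorem~\ref{thm:main2} and minimizing each factor by monotonicity, arriving at the same extremal parameter choices ($q=3,r=1,a=1$ for case~(1); $q=2,r=2,a=1,ms=d(m-r)+1$ for case~(2); $q=2,r=1,a=2$ for case~(3)). If anything, you are a bit more careful than the paper: you explicitly justify why $r\geq 1$ follows from $ms\geq d(m-r)$, $s<d$, and $d\leq m$, which the paper leaves implicit when it sets $r=1$ as the minimizer, and you separate the numeric contribution of each factor so the bookkeeping is transparent.
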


\begin{proof}
Let $h$ be the right hand side of~(\ref{eq:prob}).
Then $h$ is minimized when $ms-d(m-r)$, $r$, $q$, and $a$ are minimized.
If the conditions of $(1)$ hold, then the minimum of $h$ is obtained for $q=3$, $r=1$, $ms=d(m-r)$, and $a=1$ and is equal to $2/5$.
If the conditions of $(2)$ hold, then the minimum of $h$ is obtained for $r=2$, $ms=d(m-r)+1$, and $a=1$ and is equal to $1/3$.
Lastly, for the conditions of $(3)$, the minimum of $1/3$ is obtained for $a=2$ and $r=1$.
\end{proof}

\section{Explicit Constructions}\label{sec:construction}
In this section we present explicit constructions of linear repair schemes for $C(U,s)$, for a specific choice of the $\efq$-linear subspace $U$, where $(m-r)$ divides $m$ or $d$ divides $m$.

First, we present a construction for the code $C(U,s)$, for some $\efq$-linear subspace $U\subseteq \efqm$,
where $m-r$ divides $m$.
Recall that by Proposition~\ref{prop:linbymat},
it is suffices to show an explicit choice of an $\efq$-linear
subspace $V$ of $\efqm$ of dimension $r$ such that the pair $(U,V)$ is good.
\begin{proposition}\label{prop:mrm22}

Assume that $(m-r)$ divides $m$, $d<m$, and $ms\geq d(m-r)$. Let $\alpha\in \efqm$ be a primitive element and let
$U\deff\linspan_{q}(1,\alpha,\ldots,\alpha^{d-1})$.
Then the pair $(U,V=\ef_{q^{m-r}}^{\perp})$
is good.
\end{proposition}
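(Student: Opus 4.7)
The plan is to verify condition (2) of Proposition~\ref{prop:good} directly, using the given basis $\{1,\alpha,\ldots,\alpha^{d-1}\}$ of $U$. Since $V=\ef_{q^{m-r}}^{\perp}$, standard trace-duality gives $V^{\perp}=\ef_{q^{m-r}}$, so I need to show that for $c_0,c_1,\ldots,c_{d-1}\in\ef_{q^{m-r}}$ the system
\[
\sum_{i=0}^{d-1} c_i\,\alpha^{i q^{\ell}}=0,\qquad \ell=1,2,\ldots,s,
\]
forces all $c_i=0$. Equivalently, setting $Q(X)=\sum_{i=0}^{d-1} c_iX^i\in\ef_{q^{m-r}}[X]$, I need to show that $Q$ has no non-trivial polynomial of degree $\le d-1$ vanishing on $\{\alpha^q,\alpha^{q^2},\ldots,\alpha^{q^s}\}$.

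The key algebraic input is that, since $\alpha$ is primitive in $\efqm$ and $m-r$ divides $m$, we have $[\efqm:\ef_{q^{m-r}}]=k$ with $k\deff m/(m-r)$; hence every conjugate $\alpha^{q^{\ell}}$ is also a primitive element of $\efqm$, and its minimal polynomial over $\ef_{q^{m-r}}$ has degree exactly $k$. The Frobenius of $\efqm/\ef_{q^{m-r}}$ is $x\mapsto x^{q^{m-r}}$, so $\alpha^{q^{\ell_1}}$ and $\alpha^{q^{\ell_2}}$ share the same minimal polynomial over $\ef_{q^{m-r}}$ if and only if $\ell_1\equiv\ell_2\pmod{m-r}$.

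From this I conclude that the set $\{\alpha^{q^{\ell}}:1\le\ell\le s\}$ partitions into exactly $t\deff\min(s,m-r)$ distinct conjugacy classes over $\ef_{q^{m-r}}$, and each class forces a distinct monic minimal polynomial of degree $k$ to divide $Q$. Hence $Q\ne0$ would require $\deg Q\ge tk$, whereas $\deg Q\le d-1$. The proof then reduces to showing $tk\ge d$ in both cases: if $s\le m-r$, then $tk=sk=sm/(m-r)\ge d$ by the hypothesis $ms\ge d(m-r)$; and if $s>m-r$, then $tk=(m-r)\cdot k=m>d$ by the hypothesis $d<m$. Either way $Q\equiv0$, so all $c_i$ vanish and $(U,V)$ is good by Proposition~\ref{prop:good}.

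The only subtle step is the count of conjugacy classes; beyond that the argument is an application of the degree bound for polynomials over $\ef_{q^{m-r}}$. I do not anticipate a real obstacle, only the bookkeeping needed to handle the case $s>m-r$ (where the hypothesis $ms\ge d(m-r)$ is weaker than $d\le m-1$ and so one must invoke $d<m$ explicitly).
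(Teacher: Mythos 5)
Your proof is correct and takes essentially the same approach as the paper's: both exploit that the coefficients lie in $\ef_{q^{m-r}}$ to augment the root set of $Q$ with its $\ef_{q^{m-r}}$-conjugates (organized through degree-$k$ minimal polynomials in your writeup, or simply by listing the conjugate exponents in the paper's), and then conclude by a root-count-versus-degree argument using $ms\ge d(m-r)$. The only stylistic difference is that the paper first reduces to the minimal value $s=\lceil d(m-r)/m\rceil$, which forces $s\le m-r$ and avoids a case split, whereas you handle the case $s>m-r$ directly via $m>d$; both are valid.
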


\begin{proof}
First, notice that since $m-r$ divides $m$, it follows that $\ef_{q^{m-r}}$ is a subfield of $\ef_{q^m}$ and therefore, $V$ is a well defined $\ef_{q}$-subspace of $\ef_{q^m}$ of dimension $r$.
It suffices to prove the case where $s$ takes its minimum possible value
$\lceil d(m-r)/m \rceil$, since if we show that $M_2$ from
Proposition \ref{prop:linbymat} is of full rank for the minimal $s$ then it also holds for larger values of $s$. In particular, we may assume that
$s\leq m-r$.

By condition $(2)$ of Proposition \ref{prop:good}
it is sufficient to prove that for all $f\in\ef_{q^{m-r}}[X]$ with $\deg(f)<d$, if $f(\alpha^q)=f(\alpha^{q^2})=\cdots=f(\alpha^{q^s})=0$ then $f$ must be the zero polynomial.

If $x\in\efqm$ is a root of $f(X)$, then so are the
conjugates $x^{q^{j(m-r)}}$, for all $1\leq j< m/(m-r)$. Hence, it is
sufficient to prove that if all elements of
\begin{equation}\label{eq:R}
R\deff\left\{\alpha^{q^{j(m-r)+i}}~:~1\leq i\leq s,
0\leq j<m/(m-r)\right\}
\end{equation}
are roots of $f$, then $f$ is the zero polynomial.

Since $s\leq m-r$, all exponents $j(m-r)+i$ of $q$ appearing in
(\ref{eq:R}) are positive, distinct, and smaller than $m+1$.
It follows that $|R|=sm/(m-r)\geq d > \deg(f)$, and thus $f$ must be the zero polynomial, as
required.
\end{proof}
By Proposition~\ref{prop:mrm22} and by the duality of the goodness property given in Proposition~\ref{prop:duality}, 
we conclude that if $d$ divides $m-r$ then the pair $(\ef_{q^{d}},\tilde{U})$ is good, for some $\efq$-linear subspace $\tilde{U}\subseteq q^m$ of dimension $m-r$ that can be derived from $\linspan_{q}(1,\alpha,\ldots,\alpha^{m-r-1})$. Thus, we have an explicit construction of a linear repair scheme for $C(\ef_{q^d},s)$, in which each surviving node has to transmit at most $r$ $\efq$-symbols for the rapier of the erased node.
This construction is also a straightforward generalization of the
DM scheme and is similar to the ``scheme in one coset'' proposed by Li {\it et al.}~\cite{LWJ2}. The result is summarized in the next proposition, to which we present an alternative proof that is
based on a simple but useful argument.

\begin{proposition}\label{prop:mrm}
Assume that $d$ divides $m$, $d<m$, and that $ms=d(m-r)$.
Then the code $C(\ef_{q^d},s)$ has a linear repair scheme in which
each surviving node has to transmit at most $r$
$\ef_q$-symbols for the repair of the erased node.
\end{proposition}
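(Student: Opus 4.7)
The plan is to exhibit an explicit $\efq$-subspace $V\subseteq\efqm$ of dimension $r$ such that the pair $(\ef_{q^d},V)$ is good, and then invoke Proposition~\ref{prop:linbymat}. Write $e\deff m/d$, so that the hypothesis $ms=d(m-r)$ forces $m-r=es$. Let $\alpha\in\efqm$ be a primitive element, whose minimal polynomial over $\efq$ has degree $m$; since $es<m$, the set $\{1,\alpha,\alpha^2,\ldots,\alpha^{es-1}\}$ is $\efq$-linearly independent. Define $V$ by
\[
V^{\perp}\deff\linspan_{\efq}(1,\alpha,\alpha^2,\ldots,\alpha^{es-1}),
\]
so that $V^{\perp}$ has $\efq$-dimension $es=m-r$ and $V$ has dimension $r$, as required.

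To verify goodness I would use condition~$(3)$ of Proposition~\ref{prop:good} with the basis $b'_j=\alpha^{j-1}$, $j=1,\ldots,es$. The task reduces to showing that whenever $w_1,\ldots,w_{es}\in\ef_{q^d}$ satisfy $\sum_{j=1}^{es}w_j^{q^{\ell}}\alpha^{j-1}=0$ for every $\ell=1,\ldots,s$, then $w_1=\cdots=w_{es}=0$. The simple and useful step is to apply the $\efq$-linear bijection $\sigma^{-\ell}$ to the $\ell$th equation; using $\alpha^{q^m}=\alpha$ it becomes $\sum_{j}w_j\alpha^{(j-1)q^{m-\ell}}=0$. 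Packaging the unknowns into the single polynomial
\[
P(X)\deff\sum_{j=1}^{es}w_jX^{j-1}\in\ef_{q^d}[X],\qquad \deg P\le es-1,
\]
the $s$ constraints take the clean form $P(\alpha^{q^{m-\ell}})=0$ for $\ell=1,\ldots,s$.

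The main step is a Galois-orbit count. Because $P$ has coefficients in $\ef_{q^d}$, its set of roots in $\efqm$ is stable under $\mathrm{Gal}(\efqm/\ef_{q^d})=\langle\sigma^d\rangle$, so $\alpha^{q^{m-\ell+kd}}$ is a root of $P$ for every $\ell\in\{1,\ldots,s\}$ and every $k\in\{0,\ldots,e-1\}$. I expect the sole subtle point to be showing that these $es$ Frobenius images are pairwise distinct. Since $\alpha$ is primitive, $\alpha^{q^a}=\alpha^{q^b}$ iff $a\equiv b\pmod m$; and if $(k_1-k_2)d\equiv\ell_1-\ell_2\pmod m$, then $|(k_1-k_2)d-(\ell_1-\ell_2)|\le(e-1)d+(s-1)<ed=m$, so the congruence lifts to the identity $(k_1-k_2)d=\ell_1-\ell_2$ in $\bbZ$; divisibility by $d$ together with $|\ell_1-\ell_2|<d$ then forces $\ell_1=\ell_2$ and $k_1=k_2$. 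Consequently $P$ has at least $es$ distinct roots in $\efqm$ while $\deg P<es$, forcing $P\equiv 0$ and hence $w_1=\cdots=w_{es}=0$. This establishes that $(\ef_{q^d},V)$ is good, and Proposition~\ref{prop:linbymat} supplies the promised linear repair scheme.
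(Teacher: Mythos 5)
Your proof is correct, but it takes a genuinely different route from the paper's own proof of Proposition~\ref{prop:mrm}. The paper proves this result by a \emph{reduction}: it decomposes a codeword of $C(\ef_{q^d},s)$ over $\efqm$ into $m/d$ codewords of $\rs(\ef_{q^d},q^d-q^s)_{\ef_{q^d}}$ by writing the message polynomial $f=\sum_j b_j f_j$ over a basis of $\efqm$ over $\ef_{q^d}$ and noting that each $f_j(\beta)=\tr_{q^d,m/d}(f(\beta)b'_j)$ is extractable by a trace, and then applies the Dau--Milenkovich scheme to each of these smaller RS codes as a black box. Your argument, by contrast, stays entirely inside the paper's ``good pair'' framework: you exhibit an explicit $V$ with $V^{\perp}=\linspan_q(1,\alpha,\ldots,\alpha^{es-1})$ and verify condition~$(3)$ of Proposition~\ref{prop:good} directly by a Galois-orbit/root-counting argument. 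This is essentially the same polynomial-root-counting idea that underlies the paper's Proposition~\ref{prop:mrm22}, but with the roles of the subfield and the span of consecutive powers of $\alpha$ interchanged — i.e.\ you carry out by hand what Proposition~\ref{prop:duality} encodes abstractly (though with a slightly different choice of $V$ than the one duality would produce). What your approach buys is self-containedness: you never need to invoke the DM scheme, and you get a concrete good pair in one shot. What the paper's approach buys is modularity: it makes transparent that the scheme is a tensor-type lift of DM to $m/d$ parallel copies, which clarifies why one obtains $r=r'm/d$ and makes the connection to Li~\emph{et al.} visible. Both proofs are valid; your distinctness-of-exponents computation ($(k_1-k_2)d-(\ell_1-\ell_2)$ lies strictly between $-m$ and $m$) is correct and is exactly the point one has to get right.
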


\begin{proof}
Let $\{b_1,\ldots,b_{m/d}\}$ be any basis for $\efqm$ over $\ef_{q^d}$, and let
$\{b'_1,\ldots,b'_{m/d}\}$ be its dual basis. For a polynomial $f(X)\in\efqm[X]$
of degree at most $k-1$, where $k=q^d-q^s$, there exist polynomials $f_j(X)\in \ef_{q^d}[X]$ of degree at most $k-1$ such that
$$
f(X)=b_1f_1(X)+\cdots+b_{m/d}f_{m/d}(X).
$$
Hence, the codeword $\mathbf{c}=(f(\alpha))_{\alpha\in \ef_q^d}$
can be represented by the $m/d$ codewords of
$\rs(\ef_{q^d},q^d-q^s)_{\ef_{q^d}}$, $\mathbf{c}_j=(f_j(\alpha))_{\alpha\in \ef_q^d}$.
In addition, for $\beta\in
\ef_{q^d}$ and for all
$1\leq j\leq m/d$,
\begin{equation}\label{eq:fjbeta}
f_j(\beta)=\tr_{q^d, m/d} (f(\beta)\cdot b'_j).
\end{equation}
This implies that a linear repair scheme of $\rs(\ef_{q^d},q^d-q^s)_{\ef_{q^d}}$,
in which each surviving node transmits at most $r'$ $\efq$-symbols, results in a linear repair scheme for
$C(\ef_{q^d},s)$ in which each surviving node transmits at most $r=r'm/d$ $\efq$-symbols.

By the DM scheme, for every $1\leq s<d$, $\rs(\ef_{q^d},q^d-q^s)_{\ef_{q^d}}$ has a linear repair scheme in which each surviving node has to transmits at most $r'=d-s$ symbols, which concludes the proof.
\end{proof}

\section{Examples}\label{sec:eg}

In Table~\ref{table:14_10} we consider two specific examples of linear codes with
linear repair schemes
that are obtained from our constructions
and compare their bandwidth to known linear repair schemes of these codes.

We first consider the well known $[14,10]_{\ef_{2^8}}$ GRS code deployed at the Facebook
Hadoop Analytic cluster (see, e.g.,
\cite[Sec.~V.C]{GW} and references therein). Using Proposition~\ref{prop:mrm}, we construct $C(U,s)$ code over $\ef_{2^8}$ with $U=\ef_{2^4}$, $s=2$ and $r=4$. The code $C(U,s)$ is a $[16,12]_{2^8}$ code. We then shorten this code to obtain a $[14,10]_{\ef_{2^8}}$ code with a linear repair scheme in which $r=4$ and the bandwidth is $b=52$. This construction was also given in~\cite{LWJ2}. A naive decoding of an RS code over with $\ef_{2^8}$ with dimension 10 has bandwidth $80$, while the linear repair scheme from~\cite{DM} achieves a bandwidth of $54$, where not all surviving node transmitting the same number of bits.

The second code we consider is $C(U,s=4)$, where $U$ is an $\ef_{2^3}$-subspace of
$\ef_{2^{15}}$ of dimension two. Hence, $U$ is an $\ef_{2}$-subspace of dimension $6$ and from Theorem~\ref{thm:main}, $C(U,s)$ has a linear repair scheme in which $r=5$.
This code is a $[64,48]_{2^{15}}$ RS code. The bandwidth of a naive approach and the main scheme from~\cite{DM} are presented in Table~\ref{table:14_10}.

\begin{table}
\begin{center}
\begin{tabular}{|c|c|c|c|c|c|c|}
  \hline
   Repair Scheme   &$q$ & $m$ & $n$ & $k$ & $r$ & $b$ \\
\hline
Prop.~\ref{prop:mrm}& 2 &  8 & 14 & 10 & 4 & 52 \\
\hline
Naive& 2 & 8 & 14 & 10 & 8 & 80 \\
  \hline
DM& 2 & 8 & 14 & 10 & - & 54 \\
\hline
\hline
Thm.~\ref{thm:main}& 2 & 15 & 64 & 48 & 5 & 315 \\
\hline
Naive& 2 & 15 & 64 & 48 &  15 & 720 \\
\hline
DM & 2 & 15 & 64 & 48 & 11 & 693\\

\hline

\end{tabular}\vspace{0.2cm}\caption{}\label{table:14_10}
\vspace{-1.0cm}
\end{center}
\end{table}
Lastly, we consider the case $q=2$, $r\geq 2$, $s=1$, and $ms=d(m-r)$, where $\gcd(m,d)>1$.
A linear repair scheme for these parameters is guaranteed by Theorem~\ref{thm:main}. The constructed RS codes have
two parity symbols. Since $m=d(m-r)$, it follows that $r=m(d-1)/d$ and
the bandwidth is $(n-1)m(d-1)/d$, where $n=q^d$.
A construction of linear repair schemes for
RS codes of codimension $2$ over $\ef_{2^m}$ is also
given in \cite[Thm.~10]{GW}, with repair bandwidth $3(n-1)m/4$,
where $n\leq 2^{m/2+1}$ is the length of the code.
This shows that in general, the bandwidth of the
scheme of Theorem \ref{thm:main2} is not minimal. Note that, the scheme
of \cite[Thm.~10]{GW} is imbalanced, in the sense that about half
of the surviving nodes transmit half of their content, while the
remaining surviving nodes transmit their entire content. Moreover, the
evaluation set in this scheme is not a linear subspace.

\section{Conclusion}\label{sec:conclusion}

In this work we studied the repair problem for RS
codes, evaluated on an $\efq$-linear subspace $U\subseteq\efqm$ of dimension $d$.
For this class of RS codes, we showed the existence of linear
repair schemes, in which each surviving node transmits at most $r$ $\efq$-symbols for the repair of the erased node,
for a wide range of parameters. This result relies on the existence of an $\efq$-linear
subspace $V\subseteq\efqm$ of dimension $r$
for which the pair $(U,V)$ is good. It also yields
a practical probabilistic construction of a linear repair scheme.
We also showed that if $r<m-s$, where $q^s$ is the codimension of the RS code,
and if $V$ is chosen uniformly at random,
then the probability that $(U,V)$ is good is strictly less than one. Thus, in this case, the probabilistic construction is not trivial in the sense that not every pair $(U,V)$ is good.
Our results expand the Dau--Milenkovich scheme and one of the schemes of Li et al., for a wide range of parameters, where $r<m-s$.

Another contribution of this paper is that the presented scheme as a duality property in the following sense;
A good pair $(U,V)$ of $\efq$-linear subspaces of dimensions $d$ and $r$ can be used to construct a good pair of
$\efq$-linear subspaces of dimensions $m-r$ and $m-d$, $\left(V^{\perp},\tilde{U}\right )$, where $\tilde{U}=\left(U^{\wedge q^{s+1}}\right)^{\perp}$.
This duality property is useful for explicit constructions.

For a wide range of parameters, our scheme provides RS codes of codimension $q^s$, where the minimal value of $s$ is $d(m-r)/m$.
For future research, it will be interesting to understand if the this scheme is optimal for RS codes evaluated on linear subspaces.


\appendix

The purpose of this Appendix is to prove Proposition~\ref{prop:bad_example}. That is,
to show that
if $1\leq s<d<m$ and $ms\geq d(m-r)$, then for every $\efq$-linear subspace $U\subseteq \efqm$ of dimension $d$, there exists an $\efq$-linear subspace $V\subseteq \efqm$ of dimension $r$ for which the pair $(U,V)$ is not good if and only if $r<m-s$.
In fact, we prove a somewhat stronger result, namely that for $r<m-s$, there exists a pair $(U,V)$ that
is not good even in the weaker sense, as the corresponding matrices
$M_1,M_2$ defined in Proposition~\ref{prop:linbymat} satisfy that the column space of $M_1$
does not contained in the column space of $M_2$.
Similarly, if $r\geq m-s$, every pair $(U,V)$ is good in the weaker sense.

We first prove the first part of Proposition~\ref{prop:bad_example}, that is, the case $r\geq m-s$.
The proof follows the lines of the proof of the DM scheme stated in Theorem~\ref{thm:DM}.

\renewcommand*{\proofname}{Proof of Proposition~\ref{prop:bad_example} Part (1)}
\begin{proof}
First notice that, it is sufficient to prove the claim for $r=m-s$, since every $\efq$-linear
subspace $V$ of dimension $r>m-s$ contains an $\efq$-linear subspace, $W$,
of dimension $m-s$, and if $(U,W)$ is good then $(U,V)$ is good (this holds even in the weaker sense).

As shown in the proof of Proposition~\ref{prop:linbymat}, if
the column space of $M_1$ is contained in
the column space of $M_2$ (over $\efq$), then for every $a_0\in \efqm$, there exist $a_1,a_2,\ldots,a_s\in \efqm$, such that the
$\efq$-linearized polynomial $f(X)=a_0X+a_1X^q+\cdots+a_sX^{q^s}$ maps $U$ to $V$. The other direction also holds.

Let $V$ be an $\efq$-subspace of dimension $r=m-s$.
The image polynomial of $V$, $\fim_V(X)=b_0X+b_1X^q+\cdots +b_sX^{q^s}\in \efqm[X]$
is an $\efq$-linearized polynomial of degree $q^s$ that maps
$\efqm$ onto $V$. In particular $\fim_V(U)\subseteq V$. Notice that,
there exists a unique image polynomial of $V$, for all $V$
(see \cite{BK09} and the references therein).
The kernel of $\fim_V(X)$ is an $\efq$-linear subspace of $\efqm$ of
dimension $s$. This implies that all the roots of $\fim_V(X)$ are distinct, i.e., $\fim_V(X)$ is separable, and hence $b_0\neq 0$.

Now, for $a_0\in \efqm$, let $f_{a_0}(X)=\fim_V(a_0\cdot b_0^{-1}X)$. Then $f_{a_0}(X)=a_0X+a_1X^q+\cdots+a_sX^{q^s}$, for some $a_1,\ldots, a_s\in \efqm$, and $f_{a_0}(X)$ maps $\efqm$ to $V$. In particular, $f_{a_0}(U)\subseteq V$.
\end{proof}

Next, we prove the second part of Proposition~\ref{prop:bad_example},
namely, the case $r<m-s$.
But first, we need the following lemma.
\renewcommand*{\proofname}{Proof}
\begin{lemma}\label{lem:weak_good}
For a pair $(U,V)$ of $\efq$-linear subspaces of $\efqm$ of dimensions $d$ and $r$, respectively,
and for the corresponding matrices $M_1$ and $M_2$ as defined in Proposition~\ref{prop:linbymat}, the following are equivalent.
\begin{itemize}
\item[$1)$] The column space of $M_1$ is contained in the column space of $M_2$.
\item[$2)$] For every basis $\{u_1,\ldots,u_d\}$ of $U$ and
for all $v_1',\ldots,v_d'\in
V^{\perp}$ such that,
\begin{equation}\label{eq:onemorevg}
T(u_1,u_2,\ldots,u_d;s)\cdot (v_1',v_2',\ldots, v_d')^T
=\mathbf{0}
\end{equation}
we have that  $\sum_{i=1}^du_iv'_i=0$.
\item[$3)$] For every basis $B_2'=\{b_1',\ldots,b_{m-r}'\}$ of
$V^{\perp}$ and for all $w_1,\ldots,w_{m-r}\in U$, such that
\begin{equation}\label{eq:sform}
T(w_1,w_2,\ldots, w_{m-r};s)\cdot
(b_1,b_2',\ldots,b_{m-r}')^T=\mathbf{0}
\end{equation}
we have that $\sum_{j=1}^{m-r}w_jb'_j=0$.
\end{itemize}
\end{lemma}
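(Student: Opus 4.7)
The plan is to prove Lemma~\ref{lem:weak_good} by adapting the proof of Proposition~\ref{prop:good}. Indeed, this lemma is exactly the ``weak'' analogue of that proposition: the conclusions ``$v'_1=\cdots=v'_d=0$'' and ``$w_1=\cdots=w_{m-r}=0$'' are relaxed to $\sum_i u_iv'_i=0$ and $\sum_j w_jb'_j=0$ respectively, while the hypothesis ``$(U,V)$ is good'' is relaxed to $\colspace(M_1)\subseteq\colspace(M_2)$. Every step of the earlier proof has a direct analogue, and only one new summation identity must be introduced.

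First I would establish $(1)\Leftrightarrow(2)$. Fix a basis $\{u_1,\ldots,u_d\}$ of $U$, along with the bases $B_1,S$ used to define $M=(M_1|M_2)$. The inclusion $\colspace(M_1)\subseteq\colspace(M_2)$ is equivalent to the assertion that every row vector $\mathbf{x}\in\efq^{d(m-r)}$ satisfying $\mathbf{x}M_2=\mathbf{0}$ also satisfies $\mathbf{x}M_1=\mathbf{0}$. Chunking $\mathbf{x}$ into $d$ blocks of length $m-r$ and letting the $i$th chunk be the $B_2'$-coordinate vector of some $v'_i\in V^{\perp}$, Lemma~\ref{lemma:rightmult} identifies $\mathbf{v}'_i[u_i^{q^{\ell}}]_{B_2,S}$ with the $S'$-coordinate vector of $u_i^{q^{\ell}}v'_i$. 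Summing over $i$ shows that $\mathbf{x}M_2=\mathbf{0}$ is equivalent to equation~(\ref{eq:onemorevg}) while $\mathbf{x}M_1=\mathbf{0}$ is equivalent to $\sum_i u_iv'_i=0$. Invariance over the choice of basis $\{u_i\}$ follows from the same tensor-product substitution used in Lemma~\ref{lem:basisU}.

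For the equivalence $(2)\Leftrightarrow(3)$ I would use the same change-of-basis bijection that appears in the proof of Proposition~\ref{prop:good}. Given bases $\{u_1,\ldots,u_d\}$ of $U$ and $\{b'_1,\ldots,b'_{m-r}\}$ of $V^{\perp}$, each matrix $A=(a_{i,j})\in\efq^{d\times(m-r)}$ corresponds, bijectively in both directions, to a pair $(w_1,\ldots,w_{m-r})\in U^{m-r}$ and $(v'_1,\ldots,v'_d)\in (V^{\perp})^d$ via $w_j=\sum_i u_ia_{i,j}$ and $v'_i=\sum_j a_{i,j}b'_j$. Under this correspondence, $T(w_1,\ldots,w_{m-r};s)(b'_1,\ldots,b'_{m-r})^T=T(u_1,\ldots,u_d;s)(v'_1,\ldots,v'_d)^T$, exactly as in the earlier proof, so the hypotheses of (2) and (3) match up. The one new ingredient is the identity
\[
\sum_{j=1}^{m-r} w_j b'_j = \sum_{j}\sum_{i} u_i a_{i,j} b'_j = \sum_{i=1}^d u_i\sum_{j=1}^{m-r} a_{i,j}b'_j = \sum_{i=1}^{d} u_iv'_i,
\]
obtained by exchanging the order of summation, which matches the two conclusions to each other. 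I anticipate no serious obstacle; the main point to verify is that this summation-swap identity is precisely the bridge turning the quantitative sum in (2) into the one in (3), after which bijectivity of $A\leftrightarrow(w_j,v'_i)$ yields both implications.
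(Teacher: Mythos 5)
Your proposal is correct and follows the same approach as the paper: the paper's own proof is a terse reference to Proposition~\ref{prop:good}, replacing the right-hand sides of the implications, while you spell out the details. In particular you correctly supply the one genuinely new step, the summation-swap identity $\sum_{j}w_j b'_j=\sum_{i}u_i v'_i$ that ties the conclusions of (2) and (3) together, and you correctly note that the equivalence of (1) with the left-kernel condition, together with Lemma~\ref{lemma:rightmult}, converts $\mathbf{x}M_1=\mathbf{0}$ and $\mathbf{x}M_2=\mathbf{0}$ into $\sum_i u_iv'_i=0$ and equation~(\ref{eq:onemorevg}) respectively.
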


\begin{proof}
We first prove that conditions $(1)$ and $(2)$ are equivalent.
Let $u_1,u_2,\ldots,u_d$ be a basis for $U$.
Note that, the column space of $M_1$ in contained in the column space of $M_2$ if and only if
the left kernel of $M_2$ is contained in the left kernel of $M_1$.
Equivalently, for all $\mathbf{x}\in \efq^{d(m-r)}$ for which
$\mathbf{x}M_2=\mathbf{0}$, we have that
$\mathbf{x}M_1=\mathbf{0}$.

The proof proceeds along the lines of the proof of Proposition~\ref{prop:good}, by representing the equations
$\mathbf{x}M_2=\mathbf{0}$ and $\mathbf{x}M_1=\mathbf{0}$ as equation~(\ref{eq:onemorevg}) and $\sum_{i=1}^dv'_iu_i=0$, respectively.

To prove that conditions $(2)$ and $(3)$ are equivalent, we follow the lines of the corresponding part of the proof of Proposition~\ref{prop:good}.

\end{proof}

\renewcommand*{\proofname}{Proof of Proposition~\ref{prop:bad_example} Part (2)}
\begin{proof}
By Lemma~\ref{lem:weak_good}, given an $\efq$-linear subspace $U\subseteq \efqm$
of dimension $d$, we need to show the existence of an $\efq$-subspace $V\subseteq \efqm$ of
dimension $r$, such that for some $w_1,w_2,\ldots,w_{m-r}\in U$ and for some basis $B_2'=\{b'_1,
b'_2,\ldots,b'_{m-r}\}$ of $V^{\perp}$ we have that
$$
T(w_1,\ldots,w_{m-r};s)\cdot(b'_1,
b'_2,\ldots,b'_{m-r})^T=\mathbf{0}
$$
and
$$
\sum_{j=1}^{m-r}w_jb'_j\neq 0.
$$

Let $\tau=\min\{d,m-r\}$ and let $w_1,\ldots,w_{\tau}\in U$ be any $\efq$-linearly independent elements.
Consider the matrix $T_1=T(w_1,\ldots,w_{\tau};\tau-1)$. By Proposition~\ref{prop:rank},
we have that the rank of $T_1$ is $\tau-1$. Hence there exist $b'_1,b'_2,\ldots,b'_{\tau}\in \efqm$, not all zeros,
such that $T_1\cdot (b'_1,b'_2,\ldots, b'_{\tau})^T=\mathbf{0}$.

Next, we will show that $b'_1,\ldots, b'_{\tau}$ are $\efq$-linearly independent.
Assume to the contrary that $b'_{\tau}=\sum_{i=1}^{\tau-1}a_ib'_i$, for some $a_1,\ldots,a_{\tau-1}\in \efq$.
Then,
\begin{equation*}
\begin{aligned}
\mathbf{0}&=T_1\cdot (b'_1,b'_2,\ldots,b'_{\tau})^T\\
&=T_1\cdot(b'_1,b'_2,\ldots,b'_{\tau-1},\sum_{i=1}^{\tau-1}a_ib'_i)^T\\
&=T(w_1,w_2,\ldots,w_{\tau-1};\tau-1)\cdot(b'_1,\ldots,b'_{\tau-1})^T\\
&+T(w_{\tau};\tau-1)(a_1b'_1,a_2b'_2,\ldots,a_{\tau-1}b'_{\tau-1})^T\\
&=T(w'_1,w'_2,\ldots,w'_{\tau-1};\tau-1)\cdot(b'_1,\ldots,b'_{\tau-1})^T,
\end{aligned}
\end{equation*}
where $w'_i=w_i+a_iw_{\tau}$, $1\leq i\leq \tau-1$. Since $w_1,\ldots, w_{\tau}$ are $\efq$-linearly
independent, it follows that $w'_1,\ldots, w'_{\tau-1}$ are also $\efq$-linearly independent. By Proposition~\ref{prop:rank}
we have that
$T(w'_1,w'_2,\ldots,w'_{\tau-1};\tau-1)$ is non-singular, hence $b'_1,\ldots,b'_{\tau}$ must all be zeros and we derived a contradiction.

Define $w_{\tau+1}=\cdots =w_{m-r}=0$ and choose $b'_{\tau+1},\ldots,b'_{m-r}$ such that
$B_2'=\{b'_1,b'_2,\ldots,b'_{m-r}\}$ is a basis for some $\efq$-linear subspace $V^{\perp}$
of dimension $m-r$.

Then, since $r<m-s$, it follows that $s<m-r$, and hence, recalling that $s<d$, we have $s<\tau$.
Thus,
\begin{equation*}
\begin{aligned}
T{}&(w_1,w_2,\ldots,w_{m-r};s)\cdot (b'_1,b'_2,\ldots,b'_{m-r})^T=\\
&T(w_1,w_2,\ldots,w_{\tau};s)\cdot(b'_1,b'_2,\ldots,b'_{\tau})=\mathbf{0}.
\end{aligned}
\end{equation*}

Finally, we need to show that $\sum_{i=1}^{\tau}w_ib'_i\neq 0$.
Let $\sigma^{-1}:\efqm\rightarrow \efqm$ be the inverse of the Frobenius map, $\sigma:\efqm\rightarrow \efqm$ ($\sigma(x)=x^q$, for all $x\in \efqm$).
Let $z_i=\sigma^{-1}(w_i)$, for $1\leq i\leq \tau$. We have that $z_1,z_2,\ldots,z_{\tau}$ are $\efq$-linearly independent and hence, by
Proposition~\ref{prop:rank},
$T(z_1,z_2,\ldots,z_{\tau};\tau)$ is non-singular over $\efqm$. Thus,
$T(z_1,z_2,\ldots,z_{\tau};\tau)\cdot (b'_1,b'_2,\ldots,b'_{\tau})^T\neq \mathbf{0}$.
Writing the equations, we have that for some $0\leq \ell\leq \tau-1$,
$$
\sum_{i=1}^{\tau}{w_i}^{q^{\ell}}b'_i\neq 0.
$$
However, $b'_1,\ldots,b'_{\tau}$ satisfy that for all $1\leq \ell\leq \tau-1$,
$$
\sum_{i=1}^{\tau}{w_i}^{q^{\ell}}b'_i=0,
$$
and thus $\sum_{i=1}^{\tau}w_ib'_i\neq 0$, which concludes the proof.
\end{proof}



\end{document}